\newtheorem{definition}{Definition}[section]
\newtheorem{theorem}{Claim}
\newtheorem{lemma}{Lemma}
\newtheorem{corollary}{Corollary}[theorem]
\newcommand*\diff{\mathop{}\!\mathrm{d}}
\DeclarePairedDelimiter\abs{\lvert}{\rvert}%
\DeclareMathOperator{\Tr}{Tr}
\DeclareMathOperator{\vac}{\text{vac}}
\newcolumntype{L}{>{$}l<{$}} % math-mode version of "l" column type
\begin{document}

% Use the \preprint command to place your local institutional report
% number in the upper righthand corner of the title page in preprint mode.
% Multiple \preprint commands are allowed.
% Use the 'preprintnumbers' class option to override journal defaults
% to display numbers if necessary
%\preprint{}

\title{Improved coherent one-way quantum key distribution for high-loss channels}

\author{Emilien \surname{Lavie}}
\email[]{emilien.lavie@u.nus.edu}
\affiliation{Department of Electrical \& Computer Engineering, National University of Singapore, Singapore }

\author{Charles C.-W. \surname{Lim}}
\affiliation{Department of Electrical \& Computer Engineering, National University of Singapore, Singapore }

\date{\today}

\begin{abstract}

The coherent one-way (COW) quantum key distribution (QKD) is a highly practical quantum communication protocol that is currently deployed in off-the-shelves products.
However, despite its simplicity and widespread use, the security of COW-QKD is still an open problem.
This is largely due to its unique security feature based on inter-signal phase distribution, which makes it very difficult to analyze using standard security proof techniques.
Here, to overcome this problem, we present a simple variant of COW-QKD and prove its security in the infinite-key limit.
The proposed modifications only involve an additional vacuum tail signal following every encoded signal and a balanced beam-splitter for passive measurement basis choice.
Remarkably, the resulting key rate of our protocol is comparable with both the existing upper-bound on COW-QKD key rate and the secure key rate of the coherent-state BB84 protocol.
Our findings therefore suggest that the secured deployment of COW-QKD systems in high loss optical networks is indeed feasible with minimal adaptations applied to its hardware and software.

\end{abstract}

%\maketitle must follow title, authors, abstract, and keywords
\maketitle

\section{Introduction}
Quantum key distribution (QKD) is a promising application of quantum communications where two users, Alice and Bob, exchange quantum signals to establish a common secret key~\cite{gisin_quantum_2002,scarani_security_2009}.
The original ideas of QKD were first presented using the transmission of single photon states \cite{bennett_quantum_2014}, but the field has since evolved to include more practical communication systems based on coherent states. One prime example is the coherent one-way (COW)-QKD protocol \cite{stucki_fast_2005}, which uses a sequence of randomly modulated coherent states with fixed reference phase to distribute the secret key.
In this protocol, each secret bit is encoded into the time-of-arrival of a single light pulse and security is evaluated by checking the optical coherence of consecutive light pulses. The basic idea is to check if the optical coherence between consecutive light pulses has been disturbed---indeed, if an eavesdropper tries to measure the position of light pulse and learn the secret bit, the optical coherence between adjacent non-vacuum light pulses will be broken. This security feature was originally designed to deter the so-called photon-number-splitting (PNS) attack, which raised serious security concerns when it was first discovered in 2001~\cite{lutkenhaus_quantum_2002}.

However, the idea of using inter-signal correlation to detect PNS attacks creates a new problem. In particular, it puts the protocol in an unorthodox situation involving the analysis of sequential trains of pulses, which is fundamentally different from the standard setting based on repeated rounds of quantum communications~\cite{scarani_security_2009}. Consequently, none of the QKD proof techniques developed to date can be applied to COW-QKD. In fact, at the moment, the general security of COW-QKD remains an open problem.

That said, significant progress has been made towards demystifying the security of COW-QKD. Initially, upper bounds on the achievable secret key rate were derived using specific class of \emph{collective} attacks, which suggested a secret key rate that is linear in the channel transmittance~\cite{branciard_upper_2008, korzh_provably_2015}. However, recent studies found tighter upper bounds which feature quadratic scaling~\cite{gonzalez-payo_upper_2020, trenyi_zero-error_2021}. These results are significant because they indicate that COW-QKD may not be suitable for ultra-long-distance QKD, which are similar to what Ref~\cite{branciard_zero-error_2007} have found based on unambiguous state discrimination (USD) attacks assuming zero-error statistics. While these upper bounds provide a clear idea of what COW-QKD could theoretically achieve with lossy channels, it is not obvious if lower bounds with quadratic scaling could be obtained. We note that certain variants of COW-QKD have achieved quadratic scaling with more sophisticated optical receivers based on active switching~\cite{moroder_security_2012, wang_characterising_2019}, but these increase the complexity of the implementation. To better illustrate the current security status of COW-QKD, we plotted some of the known upper and lower bounds in Fig.~\ref{fig:keyrate_summary} assuming zero error statistics.
\begin{figure*}
  \resizebox{\textwidth}{!}{\input{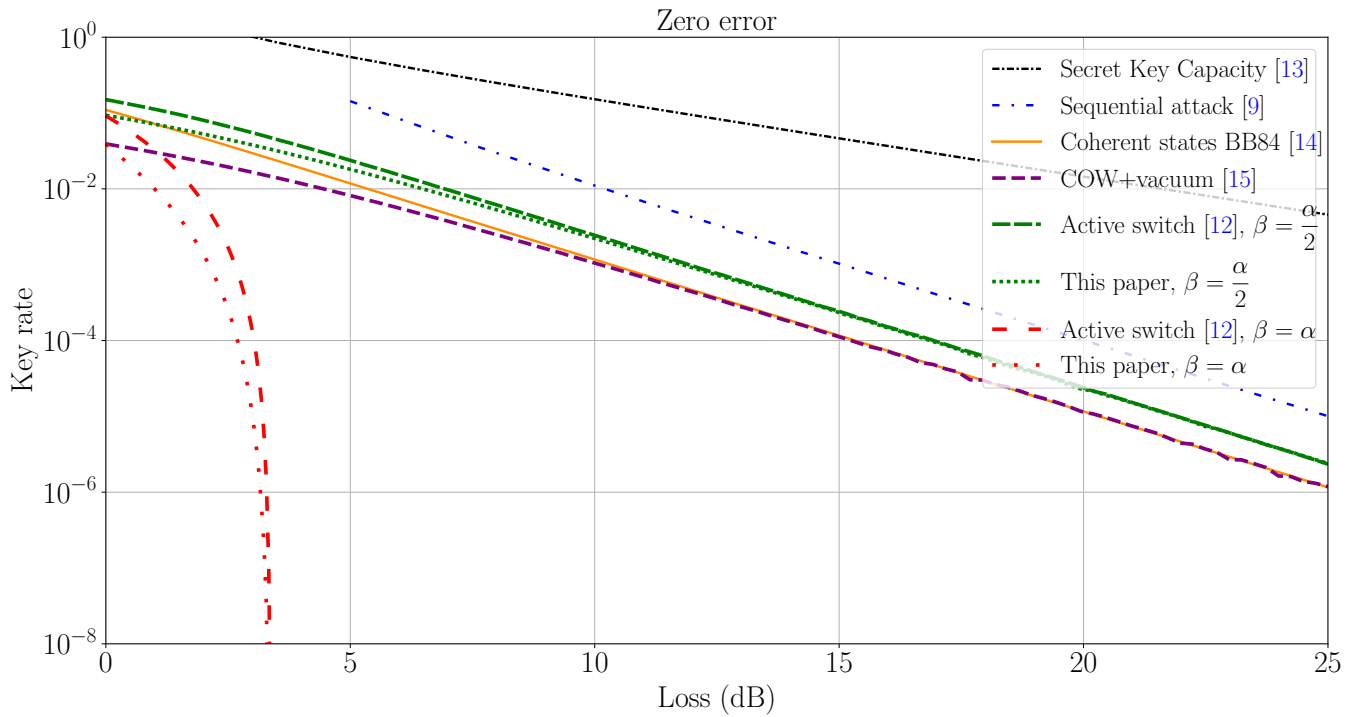}}
  \caption{We compare existing results on the asymptotic security of coherent one-way type protocols.
  For the sake of comparison, we indicate another popular type of protocol based on phase-encoded coherent states \cite{lo_security_2007}.
  Ref.~\cite{pirandola_fundamental_2017} is a fundamental upper limit for point-to-point communication.
  Ref.~\cite{trenyi_zero-error_2021} is a recent result providing an upper bound that scales only quadratically with the channel transmittance.
  Ref.~\cite{wang_characterising_2019} provides a lower bound on the secret key rate for a modified version of the protocol using an active switch instead of a passive interferometer and an optimised intensity for the test state $\ket{\beta}\ket{\beta}$.
  The protocol in this paper is using a passive interferometer for Bob as in the original design, and prepared states similar to Ref.~\cite{wang_characterising_2019} with an extra vacuum pulse sent by Alice. See more details in Section \ref{sec:method}.
  We also analyse the performance of one of the countermeasures proposed in Ref.~\cite{curty_foiling_2021} using a fourth state composed of vacuum pulses only.
  }
  \label{fig:keyrate_summary}
\end{figure*}

In this work, we show that COW-QKD can reach quadratic scaling---close to the bound established in Ref.~\cite{trenyi_zero-error_2021} with only a slight modification of the original protocol. In particular, the proposed protocol is the same as the original protocol, except for (1) an additional vacuum tail signal that is needed to keep the protocol in the standard (iid) setting and (2) a balanced beam-splitter is used to decide passively the measurement basis. To analyse the security of the protocol which is based on practical photon-counting detectors, we use a generalised form of \emph{universal squashing} to map the Hilbert space of the detectors to a two-dimensional Hilbert space~\cite{fung_universal_2011}. Then, we calculate the achievable secret key rate using the standard Shor-Preskill formalism for qubit channels by optimising the phase error rate given the expected channel statistics~\cite{moroder_security_2012}.

The rest of the paper is organised as follows.
In Section~\ref{sec:method} we first provide a detailed model of the protocol implementation (\ref{ssec:modeling}).
Then, in Section~\ref{ssec:squashing} we show how to use the universal squashing framework to estimate the statistics of a virtual single-photon protocol based on the expected statistics of the actual protocol.
We conclude the security analysis in Section~\ref{ssec:winick} where we use a numerical method to estimate a lower bound on the secure key rate of the single-photon protocol.
Finally, we present simulated results and discuss their relevance in Section~\ref{sec:results}.

\section{Method}
\label{sec:method}
\subsection{Modeling}
\label{ssec:modeling}
The protocol we consider here is based on the preparation and measurement of coherent states in three consecutive temporal modes labeled $c_0$, $c_1$ and $c_2$.
The global phase information of the laser used to prepare the states is public and known to the adversary. Any other degree of freedom is assumed to be random and do not carry any useful information about the random inputs. Here, the transmitter, Alice, prepares and sends the state $\ket{\varphi_i}$ with probability $p_i$ chosen from a predefined set:
\begin{equation}
  \begin{cases}
    \ket{\varphi_0} = \ket{\alpha}_{c_0}\ket{\vac}_{c_1}\ket{\vac}_{c_2},\\
    \ket{\varphi_1} = \ket{\vac}_{c_0}\ket{\alpha}_{c_1}\ket{\vac}_{c_2},\\
    \ket{\varphi_2} = \ket{\beta}_{c_0}\ket{\beta}_{c_1}\ket{\vac}_{c_2}.
  \end{cases}
\end{equation}
Note that Alice always sets the third temporal mode $c_2$ to the vacuum state and this is needed to ensure the protocol can be treated in the standard quantum communication setting.

The receiver, Bob, performs decoding measurement using a passive basis choice setup, which is implemented using a balanced beam-splitter leading to two possible detection lines.
The first line is a direct time-of-arrival detection line (key basis labeled $\cZ$): a threshold detector measures the presence or absence of photons in each temporal mode.
The second line is a monitoring line (test basis labeled $\cX$): a Mach-Zender interferometer that interferes consecutive pulses to check for good optical coherence.

Bob's monitoring line is such that only the middle temporal mode $c_1$ arriving on the detectors will contain relevant (conclusive) information.
The first temporal mode contains only one half of the first pulse sent by Alice that did not interfere with anything (since there is no residual light in the interferometer initially), and the third temporal mode contains half of the middle pulse sent by Alice.
Therefore, the conclusive rate in the monitoring line is only half of the one in the direct line. The overall setup is presented in Fig.~\ref{fig:cow-setup}.

\begin{figure}
  \includegraphics[width=\columnwidth]{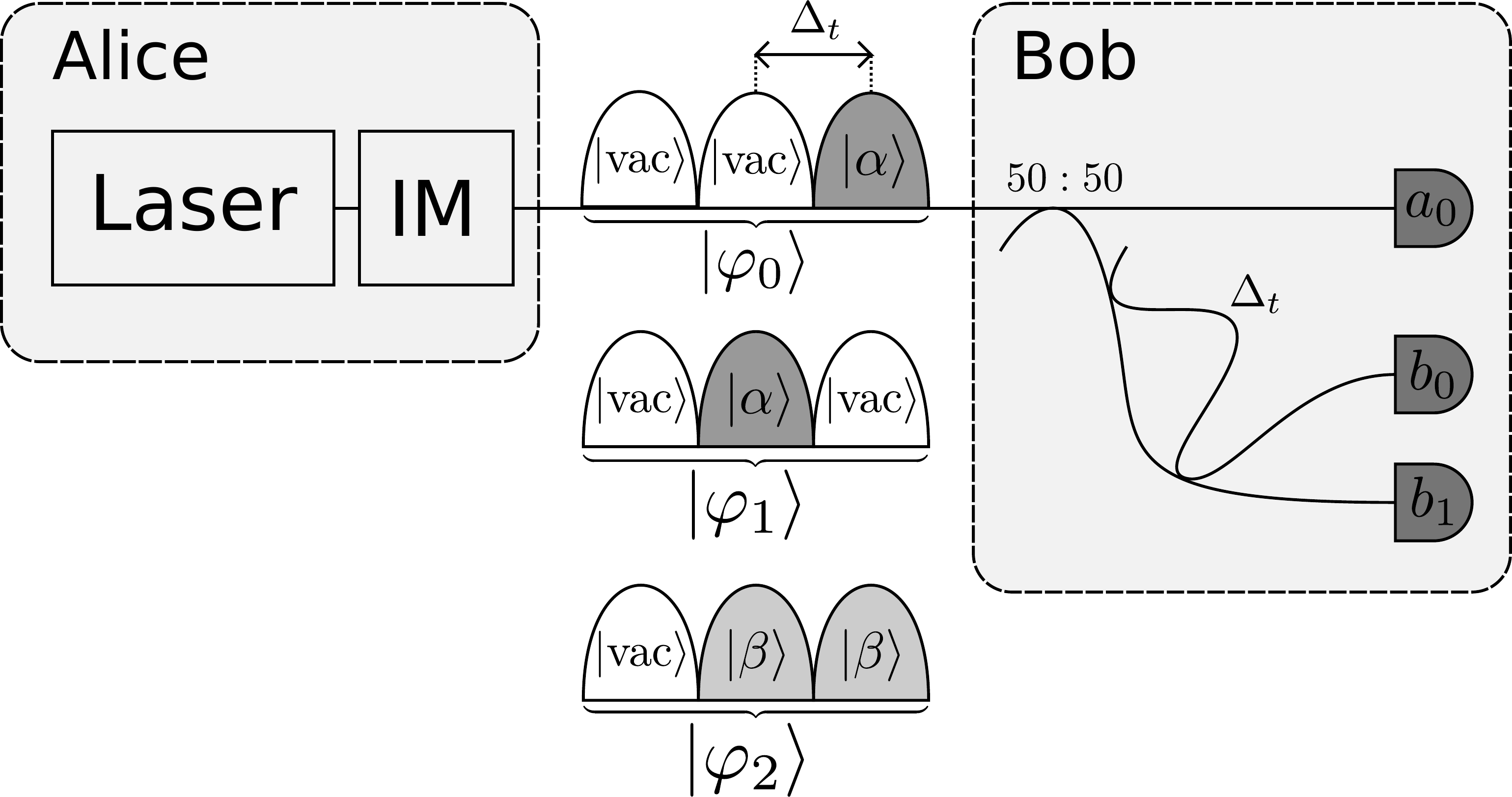}
  \caption{Protocol setup. Alice is preparing three different sequences of coherent states in temporal modes $c_0$, $c_1$, $c_2$: the sequence $\ket{\vac}_{c_2}\ket{\vac}_{c_1}\ket{\alpha}_{c_0}$ is representing bit $0$, the sequence $\ket{\vac}_{c_2}\ket{\alpha}_{c_1}\ket{\vac}_{c_0}$ is representing bit $1$ and the sequence $\ket{\vac}_{c_2}\ket{\beta}_{c_1}\ket{\beta}_{c_0}$ is used to test the channel.
  The last temporal mode $c_2$ is always set to vacuum by Alice to ensure the symmetry of the protocol. Bob is using a beamsplitter to passively choose between a direct line and a monitoring line composed of a Mach-Zender interferometer. Here, the coherence is only monitored between temporal modes $c_0$ and $c_1$ within the same train of pulses, and any coherence with another train is ignored. This is to guarantee a symmetry between the rounds of the protocol and ensure that an optimal collective attack is also an optimal coherent attack \cite{renner_symmetry_2007}. There are two minor differences with the original setup presented in Ref.~\cite{stucki_fast_2005}: an additional vacuum state is enforced at the end of each train, and we allow a lower intensity $\beta \leq \alpha$ for the test state $\ket{\varphi_2}$. }
  \label{fig:cow-setup}
\end{figure}

There are a two minor differences from the original COW protocol \cite{stucki_fast_2005}.
First, the use of a vacuum state at the end of each sequence breaks the coherence between two trains of pulses.
Therefore it is only possible to monitor the coherence within a single train of pulses.
In the original setup, the coherence between any two non-empty subsequent pulses is monitored.
This additional vacuum state makes the security analysis simpler since now an optimal collective attack is also an optimal coherent attack by virtue of symmetries \cite{renner_symmetry_2007}.
Second, we allow the possibility to use a lower intensity for the test sequence $\ket{\varphi_2}$ similar to Ref.~\cite{wang_characterising_2019}.
We expect that these modifications only require minor changes to the original design: only an additional intensity level is used at the transmitter and no additional phase modulator is required.

Now we provide a more detailed notation to describe the transmitter and the receiver.
We label the spatial mode corresponding to the only detector arm in the direct line $a_0$ and the two spatial modes corresponding to the two detector arms in the monitoring line $b_0, b_1$ (see in Fig.~\ref{fig:cow-setup}).

Additionally, we define $9$ modes corresponding to the combination of the spatial and temporal modes.
\begin{equation}
  \begin{matrix}
    d_0 = (a_0, c_0) & d_3 = (a_0, c_1)  & d_6 = (a_0, c_2) \\
    d_1 = (b_0, c_0) & d_4 = (b_0, c_1)  & d_7 = (b_0, c_2) \\
    d_2 = (b_1, c_0) & d_5 = (b_1, c_1)  & d_8 = (b_1, c_2)
  \end{matrix}
\end{equation}
We use here the notation $(a_0, c_0)$ to represent the spatial-temporal mode of the direct line $a_0$ during the first time detection window $c_0$ and similarly for the others.
We use this notation to avoid confusion later when we use the creation/annihilation operators for the modes.
Indeed, populating one photon in mode $d_0$ is effectively considering $d_0^{\dagger}\ket{\vac} = \ket{1}_{d_0}$ which is very different from the product $a_0^{\dagger}c_0^{\dagger}\ket{\vac}_{a_0}\ket{\vac}_{c_0} = \ket{1}_{a_0}\ket{1}_{c_0}$ (one photon in each mode).

Alice only has access to the spatial mode $a_0$ which is the physical channel (e.g. optical fiber or free space) over which she is sending the train of coherent states.
The two spatial modes $b_0, b_1$ come from the empty ports of Bob's beamsplitters: one for the basis choice, one for the first beamsplitter in the interferometer.
Alice (and Eve) have no access to these and they can only affect modes $d_0, d_3, d_6$ instead.
We rephrase Alice's prepared states:

\begin{equation}
  \begin{cases}
    \ket{\varphi_0} = \ket{\alpha}_{d_0}\ket{\vac}_{d_3}\ket{\vac}_{d_6}\\
    \ket{\varphi_1} = \ket{\vac}_{d_0}\ket{\alpha}_{d_3}\ket{\vac}_{d_6}\\
    \ket{\varphi_2} = \ket{\beta}_{d_0}\ket{\beta}_{d_3}\ket{\vac}_{d_6}
  \end{cases}
\end{equation}
We assume all the other unspecified modes are trusted and populated with vacuum states.

Let us now analyse Bob's measurement apparatus.
The beamsplitters are operating on two spatial modes at each time-window.
We find for the basis choice beamsplitter:
\begin{equation}
  \forall c,
  \begin{pmatrix}
    (a_0, c) \\
    (b_0, c) \\
  \end{pmatrix}
  \longrightarrow
  \begin{pmatrix}
    (\frac{a_0 + b_0}{\sqrt{2}}, c) \\
    (\frac{a_0 - b_0}{\sqrt{2}}, c) \\
  \end{pmatrix}
\end{equation}

and similarly for the others.

The delay line can be seen as a shift between the temporal modes: the first temporal mode becomes the second, the second temporal mode becomes the third.
The third temporal mode is actually never populated by Alice or Bob and it is simply an artifact of the computation.
We can always assume that the delay line is effectively transforming it into the first temporal mode.
All in all we have for the only spatial mode that is delayed:
\begin{equation}
  \begin{pmatrix}
    (b_0, c_0) \\
    (b_0, c_1) \\
    (b_0, c_2)
  \end{pmatrix}
  \longrightarrow
  \begin{pmatrix}
    (b_0, c_1) \\
    (b_0, c_2) \\
    (b_0, c_0)
  \end{pmatrix}
\end{equation}
The other modes remain unchanged.

We represent the overall transformation performed by Bob on the input state (before the detectors) with the circuit in Fig.~\ref{fig:circuit-timebin}.
It is easy to check that this circuit is representing a unitary transformation $U^{\dagger}$, since its inverse is given by the reverse circuit.

\begin{figure}
  \includegraphics[width=\columnwidth]{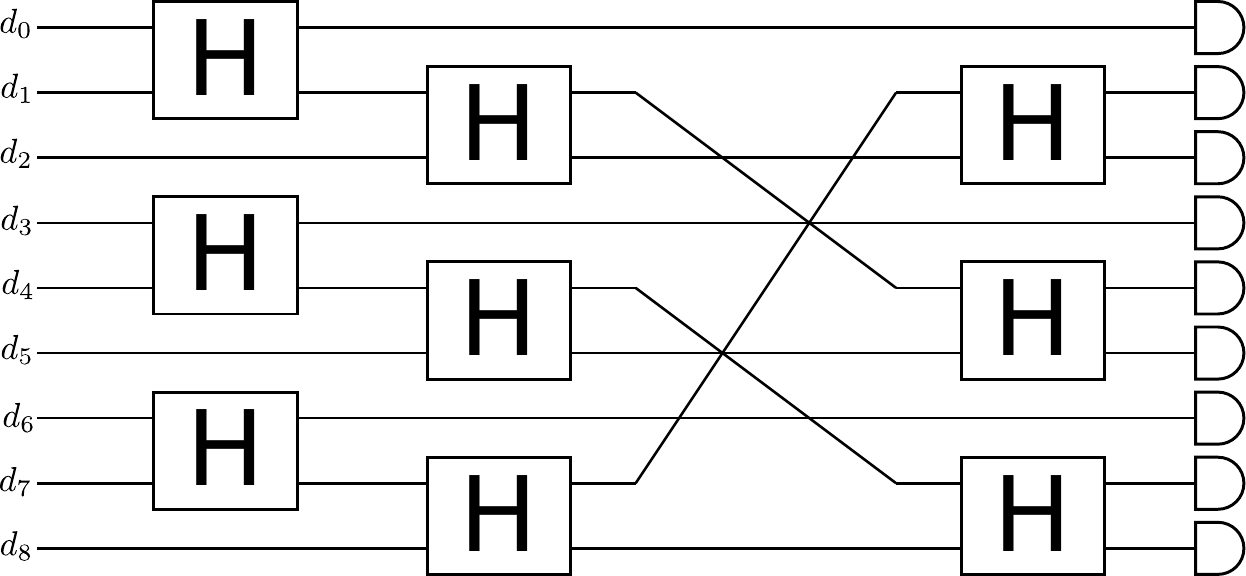}
  \caption{Equivalent circuit for the measurement setup with $H= \frac{1}{\sqrt{2}} \begin{pmatrix}1 & 1 \\ 1 & -1 \end{pmatrix}$. The first row of H matrices corresponds to the passive basis choice. The wires corresponding to $\cZ$ basis modes are directly connected to the detectors. The others go through another row of H matrices for the first beamsplitter of the interferometer, then shift of the temporal modes due to the delay line and another row of H matrices for the second beamsplitter. This circuit is defining $U^{\dagger}$ which is used to construct Bob's POVM. }
  \label{fig:circuit-timebin}
\end{figure}

Finally, we describe Bob's POVM.
We assume here that the threshold detectors are ideal (no dark counts and $100\%$ efficiency), i.e. each detector will click (event denoted by ``C'') if and only if there is one photon or more in the associated mode. The complementary event is labelled ``N''.
The operators describing the threshold detector in mode $j$ are:
\begin{equation}
  \begin{cases}
    \pi_{N_j} =& \proj{\vac}_{d_j}\\
    \pi_{C_j} =& \sum\limits_{n\geq 1} \proj{n}_{d_j}
  \end{cases}
\end{equation}
Then the overall POVM is given by:
\begin{equation}
  \begin{cases}
    \Pi_{N_j} =& U^{\dagger} \pi_{N_j} U\\
    \Pi_{C_j} =& U^{\dagger} \pi_{C_j} U
  \end{cases}
\end{equation}

Here, there are $2^{9}$ possible detection events corresponding to any combination of the detectors clicking or not clicking; we represent them using strings of C and N of length $9$.
We further classify these events by assigning a basis value and an outcome value to each event according to the rules in Table~\ref{table:keymap}.

\begin{table}
  \centering
\begin{tabular}{l|c|c|cl}
\cline{1-3}
\multicolumn{1}{|c|}{\backslashbox{Outcome}{Basis}} & $\cZ$ & $\cX$ &  &  \\ \cline{1-3}
\multicolumn{1}{|c|}{$\varnothing$}       & $(N_0N_1\dots N_8)$ & $(N_0N_1\dots N_8)$  &  &  \\ \cline{1-3}
\multicolumn{1}{|c|}{0}       & $C_0$ & $C_4$                &  &  \\ \cline{1-3}
\multicolumn{1}{|c|}{1}       & $C_3$ & $C_5$                &  &  \\ \cline{1-3}
\multicolumn{1}{|c|}{$\perp$} & $C_6$ & $C_1; C_2; C_7; C_8$ &  &  \\ \cline{1-3}
\multicolumn{1}{|c|}{d} & At least two in $\cZ$ & At least two in $\cX$ &  &  \\ \cline{1-3}
\end{tabular}
\caption{Mapping the detection pattern to basis value and outcome value.
If no detector clicked, then the basis is chosen randomly.
Any event containing at least one click $C_0$, $C_3$ or $C_6$ contributes to a measurement in basis $\cZ$ while any event containing at least one click $C_1$, $C_2$, $C_4$, $C_5$, $C_7$, or $C_8$ contributes to a measurement in basis $\cX$.
If the detection event contains clicks from the two bases, then the conflict is resolved by choosing one basis or the other at random and ignoring all the detector clicks from the other basis. If at least two different clicks from the same basis occurred, then we assign the special symbol ``d'' representing a double click. }
\label{table:keymap}
\end{table}

The outcome $\perp$ represents an event that carries no relevant information; we call it an inconclusive outcome, not to be confused with a no-click outcome $\varnothing$.
For a given input state, it is required that the clicking probabilities in each basis are equal, but the conclusive probabilities in each basis might be different.
This is the case here since the inconclusive probability in basis $\cZ$ is essentially $0$ but in the basis $\cX$, half of the total clicking probability is actually inconclusive due to the interference of only half of the signal in the middle temporal mode.
This can also be understood as a basis dependent trusted erasure channel operating after a basis independent filter.

\subsection{Universal squashing}
\label{ssec:squashing}
In the previous subsection, we modeled the receiver and defined a few relevant outcomes.
Here, we show how to deal with the double clicks to derive the statistics of a virtual single photon protocol whose security will be analysed in the next subsection.

We rely on a generalisation of the universal squashing result first proposed in Ref.~\cite{fung_universal_2011}.
The method proposed in that paper comprises two steps: an equivalence theorem stating that two virtual situations are equivalent and an estimation technique to compute the statistics of one of the two virtual situations using the statistics of the actual protocol.

First, the equivalence theorem (Theorem 1) states that under certain assumptions, the following two situations are equivalent.
In Situation 1, Bob is receiving a $n$ photon single mode input state and only keeps one photon. The single photon evolves through the unitary operation describing the receiver and is measured in only one output arm $j$ using a threshold detector.
In Situation 2, Bob keeps all n photons to interfere through the unitary operation describing the receiver and measures the number of photons in each output arm (let it be $l_j$ for the arm labelled $j$) with a photon-number resolving detector. Bob finally outputs the outcome $j$ with probability $\frac{l_j}{n}$.

Then, the estimation technique tries to estimate the statistics of Situation 2.
The general method is simple: if only one detector clicked in the real protocol, then there were one or more photons in that particular arm and the same outcome would have been announced in Situation 2.
If multiple detectors clicked in the real protocol (double click), then there is no way to resolve the conflict as any outcome could have been announced in Situation 2.
Therefore, the probability of each outcome in Situation 2 is bounded by the statistics of the real protocol: it is lower bounded by the single click probability and upper bounded by the sum of single click and double click probabilities ; the lower the double click rate, the tighter the bounds.

In our paper, we will use the estimation technique exactly as proposed in Ref.~\cite{fung_universal_2011}.

However, our considered application is bringing two issues that we need to consider in order to properly apply the equivalence theorem.
First, the theorem was stated and proved in the single mode case only.
In the protocol under consideration here, Alice has to prepare non-trivial states over several modes (e.g. $\ket{\varphi_2} = \ket{\beta}_{d_0}\ket{\beta}_{d_3}\ket{\vac}_{d_6}$).
Second, the theorem holds for any states having a fixed number of photons.
It is straightforward to generalise to any classical mixture of photon-number states, but here we use coherent states that have coherence between different photon number states and it is not clear if the theorem still holds.

We addressed the first concern with a generalisation of the equivalence theorem in the multi mode case and the second one with an additional argument based on a virtual photon number measurement.
More details are provided in Appendix \ref{app:squashing}.

With this generalised result, we are able to upper and lower bound the statistics of a virtual single-photon protocol after the statistics of the actual implementation using the universal squashing framework.
While the main application here is QKD, the universal squashing framework is a general quantum optics result that could have other applications in single photon Quantum Information Processing. For instance, Ref.~\cite{fung_universal_2011} proposed applications to qubit state tomography.

\subsection{Security analysis}
\label{ssec:winick}
The point of this subsection is to provide a lower bound on the asymptotic key rate of the protocol we described in Subsection \ref{ssec:modeling}.
Here, we restrict the analysis to a qubit protocol since we obtained qubit statistics (or rather upper and lower bounds on them) in the previous subsection \ref{ssec:squashing}.

We choose to use numerical methods to estimate the information leakage to the adversary since they are very practical and often provide better results than analytical methods.
In our case, it is possible to use the method proposed either in Ref.~\cite{winick_reliable_2018} or in Ref.~\cite{moroder_security_2012}.
Both rely on convex optimisation techniques \cite{boyd_convex_2004} and we find that both are giving similar results, the latter being substantially faster though.
The main difference lies in the objective function: Ref.~\cite{winick_reliable_2018} is minimising the quantum relative entropy which is a convex non-linear function while Ref.~\cite{moroder_security_2012} is maximising the phase error rate which is linear.

We consider here the phase error method of Ref.~\cite{moroder_security_2012} for the simulation and we indicate below the main steps to implement it.

We use the entanglement replacement scheme for the transmitter, so instead of preparing the state $\ket{\varphi_i}$ with probability $p_i$, Alice is preparing the bipartite state:
\begin{equation}
  \ket{\psi}_{AA'} = \sum_i \sqrt{p_i} \ket{i}_A \ket{\varphi_i}_{A'}
\end{equation}
She sends the system $A'$ over the channel to Bob and measures the other half $A$ (qutrit) in the computational basis $\set{\proj{i}_A}$.
Alice records her outcome value in a register $\bar{A}$ and her basis choice (announcement) in a classical register $\tilde{A}$ according to Table \ref{table:alice-bob-map}.

Alice's partial state $\rho_A$ is characterised by the overlap of the prepared states and by the preparation probabilities:
\begin{equation}
  \rho_A = \Tr_{A'}\big(\proj{\psi}_{AA'}\big) = \big[\sqrt{p_ip_j}\braket{\varphi_j | \varphi_i}\big]_{ij}
\end{equation}
The information about this state is included in the programme by considering that Alice could perform a tomography of her partial state.
There are 4 Mutually-Unbiased-Bases in dimension 3 and we include the statistics of each of these operators.

For the receiver, while we could in principle write down the whole unitary (in dimension $9$), we choose to simplify it to dimension $3$ instead to speedup the programme.
We consider an active measurement instead, so that only either the direct line or the monitoring line is operating at once on a small dimension system.
As in Ref.~\cite{winick_reliable_2018}, we consider that Bob is manipulating a qutrit where the first two dimensions correspond to a qubit and the third dimension represents the no-click outcome $\varnothing$.
We find that Bob's measurement operators are:
\begin{equation}
  \begin{split}
    \Pi_{\cZ, 0}^B =& \begin{pmatrix}
      \frac{1}{2}&0&0 \\
      0&0&0 \\
      0&0&0 \\
  \end{pmatrix}
  \Pi_{\cZ, 1}^B = \begin{pmatrix}
    0&0&0 \\
    0&\frac{1}{2}&0 \\
    0&0&0 \\
  \end{pmatrix}\\
  \Pi_{\cZ, \perp}^B =& \begin{pmatrix}
  0&0&0 \\
  0&0&0 \\
  0&0&0 \\
  \end{pmatrix}
  \Pi_{\cZ, \varnothing}^B = \begin{pmatrix}
    0&0&0 \\
    0&0&0 \\
    0&0&\frac{1}{2} \\
  \end{pmatrix}\\
  \Pi_{\cX, 0}^B =& \begin{pmatrix}
    \frac{1}{8}&\frac{1}{8}&0 \\
    \frac{1}{8}&\frac{1}{8}&0 \\
    0&0&0 \\
  \end{pmatrix}
  \Pi_{\cX, 1}^B = \begin{pmatrix}
    \frac{1}{8}&-\frac{1}{8}&0 \\
    -\frac{1}{8}&\frac{1}{8}&0 \\
    0&0&0 \\
  \end{pmatrix}\\
  \Pi_{\cX, \perp}^B =& \begin{pmatrix}
  \frac{1}{4}&0&0 \\
  0&\frac{1}{4}&0 \\
  0&0&0 \\
  \end{pmatrix}
  \Pi_{\cX, \varnothing}^B = \begin{pmatrix}
  0&0&0 \\
  0&0&0 \\
  0&0&\frac{1}{2} \\
  \end{pmatrix}
  \end{split}
\end{equation}

After performing a measurement in either of the bases, Bob will register an announcement value in a classical register $\tilde{B}$ and an outcome value in a register $\bar{B}$.
Here the classical register $\tilde{B}$ corresponds to the public announcement Bob will make about his results, but it is not equivalent to a typical basis choice.
The announcement is different for no-click, conclusive (including basis choice) or inconclusive outcomes.
Bob will register his outcome value and announcement value according to Table \ref{table:alice-bob-map}.

\begin{table}
  \begin{tabular}{|c|c|c|}
      \hline
      \backslashbox{$\bar{a}$}{$\tilde{a}$} & 0 & 1\\ \hline
                  0             &   $\ket{\varphi_0}$   &  $\ket{\varphi_2}$ \\ \hline
                  1             &   $\ket{\varphi_1}$   &  Not used \\ \hline
  \end{tabular}
  \begin{tabular}{|c|c|c|c|c|}
      \hline
      \backslashbox{$\bar{b}$}{$\tilde{b}$} & 0 & 1 & 2 & 3\\ \hline
                  0             &   $0_{\cZ}$   &  $0_{\cX}$ &$\varnothing_{\cZ}$& $\perp_{\cZ}$ \\ \hline
                  1             &   $1_{\cZ}$   &  $1_{\cX}$ &$\varnothing_{\cX}$& $\perp_{\cX}$ \\ \hline
  \end{tabular}
  \caption{Announcement and outcome value map for Alice and Bob}
  \label{table:alice-bob-map}
\end{table}
We only consider the announcements $(\tilde{a}=0, \tilde{b}=0)$ for key generation and we use a few more operators to define certain errors and detection probabilities:
\begin{equation}
  \begin{matrix}
    \Pi_{\text{det}, \cZ} =& (\proj{0}_{A}+\proj{1}_{A}) \otimes (\Pi_{\cZ, 0}^B + \Pi_{\cZ, 1}^B) \\
    \Pi_{\text{det}, \cX} =& \proj{2}_{A} \otimes (\Pi_{\cX, 0}^B+\Pi_{\cX, 1}^B) \\
    \Pi_{\text{det, phase}} =& \big(\proj{+}_{A}+\proj{-}_{A}\big) \otimes \big(\Pi_{\cX, 0}^B + \Pi_{\cX, 1}^B\big)\\
    \Pi_{e_{\cZ}} =& \proj{0}_A \otimes \Pi_{\cZ, 1}^B + \proj{1}_A \otimes  \Pi_{\cZ, 0}^B \\
    \Pi_{e_{\cX}} =& \proj{2}_A \otimes \Pi_{\cX, 1}^B \\
    \Pi_{e_{\text{phase}}} =& \proj{+}_{A} \otimes \Pi_{\cX, 1}^B + \proj{-}_{A} \otimes \Pi_{\cX, 0}^B
  \end{matrix}
\end{equation}
where $\ket{\pm}_A = \frac{1}{\sqrt{2}}\big(\ket{0}_A \pm \ket{1}_A\big)$.

Then it is easy to define:
\begin{equation}
  \begin{matrix}
    p_{\text{det}, \cZ} =& \Tr(\rho_{AB}\cdot \Pi_{\text{det}, \cZ}) \\
    p_{\text{det}, \cX} =& \Tr(\rho_{AB}\cdot \Pi_{\text{det}, \cX}) \\
    p_{\text{det, phase}} =& \Tr(\rho_{AB}\cdot \Pi_{\text{det, phase}}) \\
    e_{\cZ} =& \frac{1}{p_{\text{det}, \cZ}}\Tr(\rho_{AB}\cdot \Pi_{e_{\cZ}})\\
    e_{\cX} =& \frac{1}{p_{\text{det}, \cX}}\Tr(\rho_{AB}\cdot \Pi_{e_{\cX}})\\
    e_{\text{phase}} =& \frac{1}{p_{\text{det, phase}}}\Tr(\rho_{AB}\cdot \Pi_{e_{\text{phase}}})
  \end{matrix}
\end{equation}

We note that the bit error in the basis $\cX$ is related to the usual visibility parameter $V$ with the relation
\begin{equation}
  V = 1-2 e_{\cX}
\end{equation}

Finally, the optimisation problem can be cast as a Semi-Definite Programme (SDP):
\begin{equation}
  \begin{matrix}
    \max & e_{\text{phase}}\\
    \text{s. t. } & p_k^{\downarrow} \leq \Tr(\rho_{AB} \Pi_k) \leq p_k^{\uparrow} \\
                  & \rho_{AB} \succeq 0
  \end{matrix}
\end{equation}
for certain measurement operators $\Pi_k$ and lower and upper bounds $p_k^{\downarrow}$ and $p_k^{\uparrow}$ and the key rate is given by:
\begin{equation}
  K \geq p_{\text{det}, \cZ} \big(1 - h_2(e_{\cZ}) - h_2(e_{\text{phase}})\big)
\end{equation}

\section{Results and Discussion}
\label{sec:results}
We perform a simulation to visualise the expected performance of our proposed protocol.
We consider two possible values for the test intensity: $\beta=\alpha$ (same intensity as the key states) or $\beta=\frac{\alpha}{2}$ (one quarter of the intensity of the key state).
We choose a highly biased state preparation where the key states are prepared most of the time: $p_0+p_1=99\%$.
The detectors are assumed to have no dark counts and $100\%$ efficiency.

In Fig.~\ref{fig:keyrate_summary}, we consider a loss-only channel, and in Fig.~\ref{fig:keyrate_cow} we consider a noisy channel with a fixed error rate in both bases $e_{\cZ} = e_{\cX} = 1\%$.

\begin{figure*}
  \resizebox{\textwidth}{!}{\input{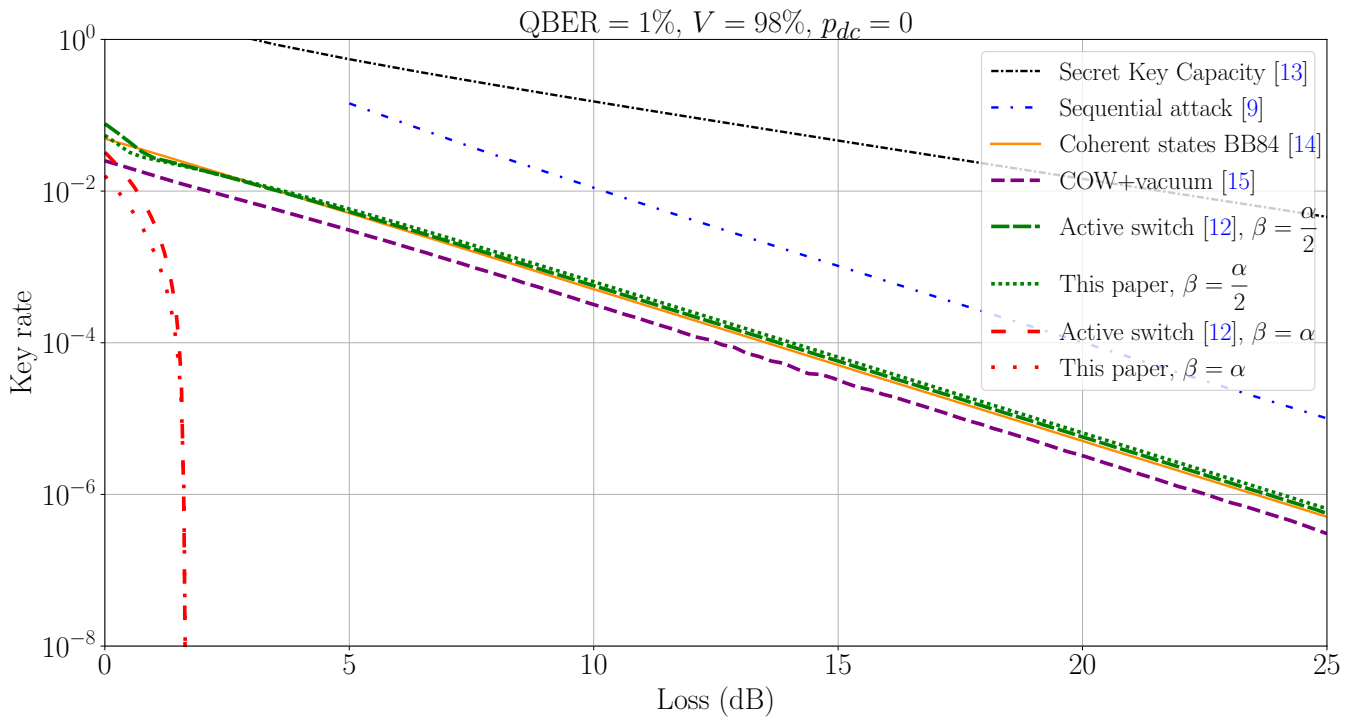}}
  \caption{We simulate a noisy situation with a quantum bit error rate in both bases $e_{\cZ}=e_{\cX} = 1\%$ (or equivalently visibility $98\%$), and ideal threshold detectors with $100\%$ efficiency and no dark counts.
  Two possible modification of the original design offer performances close to another popular design based on phase-encoded coherent states \cite{lo_security_2007}. It is possible either to include an additional test state with vacuum pulses as suggested in Ref.~\cite{curty_foiling_2021} or to modulate the intensity of the test state as mentioned in Ref.~\cite{wang_characterising_2019} to achieve a quadratic scaling. }
  \label{fig:keyrate_cow}
\end{figure*}

Our study reveals that both in the loss-only and in the noisy situation, the original encoding $\beta=\alpha$ cannot guarantee an optimal quadratic scaling, but a proper modulation of $\beta$ can achieve it.
Moreover, any more advanced security analysis involving the inter-phase information would at best only improve the performance marginally since our results (in the loss-only situation) lie close to the upperbound derived in Ref.~\cite{trenyi_zero-error_2021}.

We also analyse one countermeasure proposed in Ref.~\cite{curty_foiling_2021}. If a fourth test state $\ket{\vac}_{c_0}\ket{\vac}_{c_1}\ket{\vac}_{c_2}$ with vacuum pulses only is used along with the original encoding $\beta=\alpha$ for the test state, a quadratic scaling is also achievable.

Surprisingly, those two possible modifications give performances that are similar to a phase-encoded BB84 implementation requiring the preparation of four coherent states with a phase modulation \cite{lo_security_2007}. Thus offering a viable alternative with only limited modifications to the intensity modulator and without phase modulator.

We also notice that our analysis gives results similar to those reported in Ref.~\cite{wang_characterising_2019} with an active switch and a basis independent filter.
We think that the difference in the low loss regime comes from the penalty caused by the use of the universal squashing method: the double click rate is non-negligible in the low loss regime and becomes smaller with a higher channel loss.
Hence it seems that the performance of the protocol is preserved as long as the inconclusive rate (i.e. the clicks in the monitoring line outside of the middle temporal mode; which have been enforced to be zero in Ref.~\cite{wang_characterising_2019} using active switching) corresponds to a trusted erasure channel operating after a basis independent filter.

\section{Conclusion and Outlook}
We presented the security analysis of a coherent state based quantum key distribution protocol against collective attacks in the asymptotic regime.
Our approach relies on the application of the universal squashing framework to bound the single photon statistics, and then the single photon security analysis is performed using numerical methods.
Our simulated results illustrate that our analysis can only establish poor lower bounds on the secure key rate for the original design when not using the inter-signal phase information, but it this bound can also significantly be improved using minor modifications.
Indeed, we have shown that modulating the intensity of the test state $\ket{\beta}\ket{\beta}$ to a lower value $\beta \leq \alpha$, or sending an additional test state $\ket{\text{vac}}\ket{\text{vac}}$ can guarantee that the key rate scales quadratically with the channel loss.
Interestingly, it seems that the performance of this upgraded scheme is comparable to an other popular design based on phase modulated coherent states \cite{lo_security_2007}.

Our results also highlight that the universal squashing framework might have been overlooked when it was initially proposed, while it actually has a considerable interest for applications where it is challenging to obtain an exact squashing model.

The security analysis in the non-asymptotic regime and against general attacks is left to further work.

\section*{Acknowledgements}
We thank Goh Koon Tong and Ignatius W. Primaatmaja for valuable comments and suggestions on the manuscript.
The work is funded by the National Research Foundation (Singapore) Fellowship grant (NRFF11-2019-0001) and Quantum Engineering Programme 1.0 grant (QEP-P2).

% Specify following sections are appendices. Use \appendix* if there
% only one appendix.
\appendix
\section{Universal squashing with multimode coherent states}
\label{app:squashing}
We provide additional details to show that the universal squashing framework is applicable for the security analysis of the COW protocol.

We revisit the equivalence result of Ref.~\cite{fung_universal_2011} (Theorem 1) using a description with optical modes.
They consider a natural squashing operation that keeps only $1$ photon at random out of a pulse of $n$ photons in a single mode.
They show that a protocol implementing this squashing operation has identical statistics as one keeping all $n$ incoming photons, measuring the number $l_j$ of photons in each output arm and annoucing the outcome $j$ with probability $\frac{l_j}{n}$.

We propose a different derivation for their result in the single mode case in Section \ref{ssec:app-single-mode}, and then we generalise it to the multimode case in Section \ref{ssec:app-multi-mode}.
Finally in Section \ref{ssec:app-squashing-coherent} we discuss an additionnal argument to apply the equivalence theorem to coherent states instead of states with a fixed number of photons.

\subsection{Single mode case}
\label{ssec:app-single-mode}
We consider here the case in dimension $2$ to keep equations short but the derivation can be extended to any dimension $d\geq 2$.
We denote the measurement modes $a_0^{\dagger}, a_1^{\dagger}$ and input the modes $b_0^{\dagger}, b_1^{\dagger}$, we assume they are related by a unitary transformation:
\begin{equation}
  \begin{matrix}
    b_0^{\dagger} &= u_{0,0}a_0^{\dagger} + u_{0,1}a_1^{\dagger}\\
    b_1^{\dagger} &= u_{1,0}a_0^{\dagger} + u_{1,1}a_1^{\dagger}\\
  \end{matrix}
\end{equation}

We assume that the input state is a population of $n$ photons in a single mode, say $b_0$ for instance, i.e. the input state is $\frac{1}{\sqrt{n!}}\big(b_0^{\dagger}\big)^n \ket{\vac}$.

After the squashing operation, the single photon is found in arm $j$ with probability $\abs{u_{0,j}}^2$.

Using the other protocol instead, we find the probability of photons in the various arms to be:
\begin{equation}
\label{eq:normalisation_monomode}
\Pr(l_0, l_1) = {n \choose {l_0, l_1}} \abs{u_{0,0}}^{2 l_0}\abs{u_{0,1}}^{2l_1}
\end{equation}

Let us write $X_0 = \abs{u_{0,0}}^2$ and $X_1 = \abs{u_{0,1}}^2$, then the probability of outcome $j$ in this protocol is:
\begin{align}
  \label{eq:cpp_monomode}
    \Pr(j) =& \sum\limits_{l_0+l_1 = n} \frac{l_j}{n}{n \choose {l_0, l_1}} X_0^{l_0} X_1^{l_1}  \\
            =& \frac{1}{n}X_j \frac{\diff \Big[\big( X_0 + X_1 \big)^n \Big] }{\diff X_j} \\
            =& X_j \big( X_0 + X_1\big)^{n-1} \\
            =& X_j = \abs{u_{0,j}}^2
\end{align}

Using this approach, it is easy to see that the photon number probability in each arm as in Eq.~\eqref{eq:normalisation_monomode} is naturally a $0$-th moment (sum to $1$, i.e. normalisation) and the classical postprocessing probability in Eq.~\eqref{eq:cpp_monomode} is a $1$st moment.
We generalise this property to any number of input modes in the next section.

\subsection{Multimode case}
\label{ssec:app-multi-mode}
We show that the equivalence still holds when $d$ modes $b_0^{\dagger}, b_1^{\dagger}, \dots, b_{d-1}^{\dagger}$ are populated with $k_0, k_1, \dots, k_{d-1}$ photons, with $k_0+k_1 + \dots + k_{d-1}=n$.
In this case, the outcome probability for the squashing protocol is:
\begin{equation}
  \label{eq:squashing-stats1}
\Pr(j) = \sum_{i=0}^{d-1} \frac{k_i}{n} \abs{u_{i,j}}^2
\end{equation}

The probability for the second protocol is:
\begin{equation}
  \label{eq:squashing-stats2}
  \begin{split}
    \Pr(j) =& \sum_{l_0+l_1 = n} \frac{l_j}{n} \frac{1}{l_0! l_1! k_0! k_1!}\\
              &\cdot\abs{\braket{\vac | \big(a_0\big)^{l_0}\big(a_1\big)^{l_1} \big(b_0^{\dagger}\big)^{k_0}\big(b_1^{\dagger}\big)^{k_1}|\vac}}^2
  \end{split}
\end{equation}

The equality between Eqs. \ref{eq:squashing-stats1} and \ref{eq:squashing-stats2} is proved below in a slightly more general case (non normalised vectors) and follows three major steps.
First we establish a few results on combinatorics.
Next we show that a particular matrix transformation is a group homomorphism (Claim \ref{eq:app-theorem1}) and show the $0$-th moment property (i.e. normalisation).
Finally, we revisit the derivation of Claim \ref{eq:app-theorem1} to compute the first moment instead, which gives directly the equivalence result in the multimode case.

\subsubsection{Notations}
We start by defining some quantities that are useful to simplify the notations later on.

\begin{definition}[d-multinomial coefficient]
  \begin{equation}
    {n \choose {n_0, n_1, \dots n_{d-1}}} = \begin{cases}\frac{n!}{n_0! \dots n_{d-1}!} & \text{ if } \sum\limits_i n_i = n \\ 0 & \text{ otherwise} \end{cases}
  \end{equation}
\end{definition}

\begin{definition}[Line of total weight $n$]
  A one dimension array $(l_0, \dots, l_{d-1}) \in \set{0 \dots n}^d$ is a line of total weight $n$ if $\sum\limits_i l_i = n$
\end{definition}
We denote $L(n)$ the ensemble of lines of total weight $n$.

$L(n)$ has $N = {n+d-1 \choose {n, d-1}}$ elements, we can index its elements using an index in $\set{0 \dots N-1}$.
We can identify a line with its associated index that we also label $l$ or $l(n)$ if we need to specify the total weight to avoid confusion.
In the following the indices $l, k$ are reserved for elements of $L$, and we use lowerscripts to indicate the component of the solution, e.g. $l_0$ is the $0$-th component of $l(n)$ which is the $l$-th line of total weight $n$ in $L(n)$ and similarly for $k$.

\begin{definition}[Square of total weight $n$]
  A two dimension array $(m_{ij})\in \set{0 \dots n}^{d^2}$ is a square of total weight $n$ if $\sum\limits_{ij}m_{ij}=n$
\end{definition}
We denote $M(n)$ the ensemble of squares of total weight $n$.
If we are given two lines $k(n)$ and $l(n)$, it is also possible to construct more constrained squares where we add the additional constraints:
\begin{align}
  \forall i \in \set{0 \dots d-1}, & \sum\limits_j m_{ij} = k_i \\
  \forall j \in \set{0 \dots d-1}, & \sum\limits_i m_{ij} = l_j
\end{align}

and we denote $M(k, l)$ the ensemble of such squares.

Similarly, we identify a square with its index in $M(n)$ or $M(k,l)$ that we label again $m$ or $m(n)$ or $m(k,l)$ depending on the context. We also occasionally use the following array notation to visualise the sum over rows and columns (here in dimension $2$):
\begin{equation}
  m(k, l)\equiv\begin{tabular}{L L|L}
    m_{00} & m_{01} & k_{0} \\
    m_{10} & m_{11} & k_{1} \\
    \hline
    l_0 & \l_{1} & n
  \end{tabular}
\end{equation}

\begin{definition}[Cubes of total weight $n$]
  A three dimension array $(p_{ijk}) \in \set{0 \dots n}^{d^3}$ is a cube of total weight $n$ if $\sum\limits_{ijk}p_{ijk}=n$
\end{definition}
We label $P(n)$ the ensemble of cubes of total weight $n$.
Again, if we are given three squares $m(k(n), \tilde{l}(n))$, $\bar{m}(k(n), l(n))$ and $\tilde{m}(l(n), \tilde{l}(n))$, we can constraint more the ensemble above by adding the constraints:
\begin{align}
  \sum\limits_{i} p_{ijk} &= \tilde{m}_{kj} \text{  (warning: transpose here !)}\\
  \sum\limits_{j} p_{ijk} &= \bar{m}_{ik}\\
  \sum\limits_{k} p_{ijk} &= m_{ij}
\end{align}
and we identify the elements of $P(m, \tilde{m}, \bar{m})$ with their indices $p(m, \tilde{m}, \bar{m})$ (we can also have only one or two square constraints out of three).
Fig~\ref{fig:cube} represents one element $p \in P(m, \bar{m}, \tilde{m})$.
\begin{center}
  \begin{figure}
    \includegraphics[keepaspectratio, width=\columnwidth]{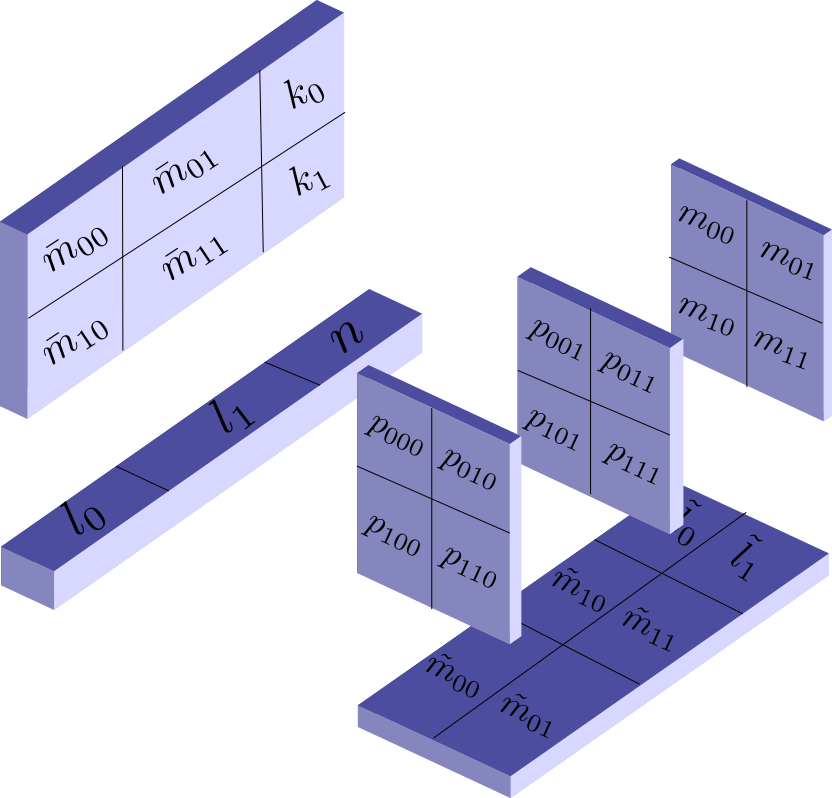}
    \caption{Graphical representation of one $p \in P(m, \bar{m}, \tilde{m})$. The sum over one axis is a projection over on one square. The sum along two axes is a projection on one line. The sum over all three axes is $n$ the total weight of the cube. }
    \label{fig:cube}
  \end{figure}
\end{center}

\subsubsection{Preliminary results}
We introduce here three Lemmas. Lemma 1 is a well know relation and is not directly used to prove our result, but its generalisation will be the main ingredient of Lemma 2.
Lemma 3 is the key ingredient to prove the result in the next subsection and its derivation relies on Lemma 2.

The results are given for $d=2$ to keep equations short, but the results hold for any $d \geq 2$

\begin{lemma}
  \label{eq:lemma1}
  (Vandermonde's identity)
  Let $n, m$ be any two non-negative integer and $l(n) \in L(n)$, then:
  \begin{equation}
    \sum_{k(m)} {l_0 \choose {k_0, l_0-k_0}} {l_1 \choose {k_1, l_1-k_1}} = {n \choose {m, n-m} }
  \end{equation}
\end{lemma}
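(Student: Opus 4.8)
The plan is to prove this as the classical Vandermonde identity, by comparing coefficients in an elementary polynomial identity. First observe that the $2$-multinomial coefficient ${l_0 \choose {k_0,\,l_0-k_0}}$ is, by the definition given above, exactly the ordinary binomial coefficient $\binom{l_0}{k_0}$ (with the usual convention that it vanishes unless $0\le k_0\le l_0$), and likewise ${l_1 \choose {k_1,\,l_1-k_1}}=\binom{l_1}{k_1}$. Moreover, since $l(n)\in L(n)$ we have $l_0+l_1=n$, and the index set $k(m)\in L(m)$ is precisely the set of pairs with $k_0+k_1=m$. So the left-hand side is $\sum_{k_0+k_1=m}\binom{l_0}{k_0}\binom{l_1}{k_1}$, and terms with $k_0>l_0$ or $k_1>l_1$ contribute zero, so the sum may be taken over all of $L(m)$ without extra side conditions.

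Next I would introduce a formal variable $x$ and use the trivial identity $(1+x)^{l_0}(1+x)^{l_1}=(1+x)^{l_0+l_1}=(1+x)^n$. Expanding each factor on the left by the binomial theorem and reading off the coefficient of $x^m$ gives $[x^m]\bigl((1+x)^{l_0}(1+x)^{l_1}\bigr)=\sum_{k_0+k_1=m}\binom{l_0}{k_0}\binom{l_1}{k_1}$, which is exactly the left-hand side of the claim. On the right, the binomial theorem gives $[x^m](1+x)^n=\binom{n}{m}={n \choose {m,\,n-m}}$. Equating the two coefficients of $x^m$ completes the proof. (If a combinatorial flavour is preferred, an equally short argument counts the $m$-element subsets of a set of $n=l_0+l_1$ elements partitioned into a block of size $l_0$ and a block of size $l_1$, sorting by how many elements are drawn from each block.)

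There is no real obstacle here — the lemma is standard — so the only care needed is notational bookkeeping: making the reduction of the $2$-multinomial to a binomial explicit, and noting that the constraint $\sum_i k_i=m$ defining $L(m)$ is precisely the degree-$m$ selection, while out-of-range terms vanish by the zero convention. Since the paper immediately needs a generalisation of this statement (the homomorphism property used in Lemma~2 and the $0$-th "moment"/normalisation $\sum_{k(m)}\dots={n \choose {m,\,n-m}}$, mirroring Eq.~\eqref{eq:normalisation_monomode}), I would write the argument in the "expand a product, extract a coefficient" form so that the same mechanism is visibly the one to be reused in higher dimension.
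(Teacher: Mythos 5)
Your proposal is correct and follows essentially the same route as the paper: expanding $(1+X)^{l_0}(1+X)^{l_1}=(1+X)^n$ and identifying the coefficient of $X^m$. The extra remarks on reducing the $2$-multinomial coefficient to a binomial and on the vanishing of out-of-range terms are sound bookkeeping that the paper leaves implicit.
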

\begin{proof}
  We consider the expansion of $(1 + X)^n = (1 + X)^{l_0+l_1}$. We find:
  \begin{align}
    (1 + X)^n &= \sum\limits_{m = 0}^n X^m {n \choose {m, n-m}}\\
    (1 + X)^{l_0+l_1} &= \sum\limits_{m=0}^n X^m \sum\limits_{k_0 + k_1 = m} {l_0 \choose {k_0, l_0 - k_1}}{l_1 \choose {k_1, l_1 - k_1}}
  \end{align}
  Then the result follows from the identification of the coefficient in $X^m$.
\end{proof}

\begin{lemma}
  \label{eq:lemma2}
  Let n be any non-negative integer, $k, l \in L(n)$, then:
  \begin{equation}
    \sum\limits_{\begin{tabular}{L L|L}
      m_{00} & m_{01} & k_{0} \\
      m_{10} & m_{11} & k_{1} \\
      \hline
      l_0 & \l_{1} & n
    \end{tabular}} {k_0 \choose {m_{00}, m_{01}}} {k_1 \choose {m_{10}, m_{11}}} = {n \choose {l_0, l_1} }
  \end{equation}
\end{lemma}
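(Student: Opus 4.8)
The plan is to prove the identity by expanding a single multivariate polynomial in two different ways and matching coefficients of a fixed monomial; this is precisely the generalisation of the one-variable generating-function argument used for Lemma~\ref{eq:lemma1}. I describe it for arbitrary dimension $d$, the stated case $d=2$ being the specialisation. First I would introduce formal variables $X_0,\dots,X_{d-1}$ and start from the trivial polynomial identity
\begin{equation}
  \prod_{i=0}^{d-1}\Bigl(X_0+\dots+X_{d-1}\Bigr)^{k_i}=\Bigl(X_0+\dots+X_{d-1}\Bigr)^{n},
\end{equation}
which holds because $\sum_i k_i=n$. The point is that the factorisation on the left records, for each occurrence of a variable $X_j$, the index $i$ of the factor it came from, and this extra bookkeeping is exactly a square of total weight $n$ with prescribed row sums.

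The steps, in order, would be: (i) expand each factor on the left with the multinomial theorem, so the $i$-th factor contributes $\sum {k_i \choose {m_{i0},\dots,m_{i,d-1}}}\prod_j X_j^{m_{ij}}$ over non-negative tuples with $\sum_j m_{ij}=k_i$; (ii) multiply over $i$ and collect the coefficient of a fixed monomial $\prod_j X_j^{l_j}$ with $\sum_j l_j=n$ --- this coefficient equals $\sum_{m\in M(k,l)}\prod_i {k_i \choose {m_{i0},\dots,m_{i,d-1}}}$, since the row constraints $\sum_j m_{ij}=k_i$ are built into step (i) and collecting the exponent of each $X_j$ imposes the column constraints $\sum_i m_{ij}=l_j$; (iii) expand the right-hand side with the multinomial theorem, giving coefficient ${n \choose {l_0,\dots,l_{d-1}}}$ for the same monomial; (iv) equate the two coefficients. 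For $d=2$ this is the stated identity directly, and one may note in passing that the sum then collapses to Vandermonde's identity because a square of $M(k,l)$ is fixed by the single entry $m_{00}$; equivalently it has a bijective reading as colouring $n$ objects pre-split into blocks of sizes $k_0,k_1$ so that exactly $l_0$ receive colour $0$.

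I do not expect any genuine difficulty here; the only thing that needs care is the combinatorial bookkeeping --- checking that the monomials produced by the expansion are in bijection with $M(k,l)$, and in particular that the last column constraint $\sum_i m_{i,d-1}=l_{d-1}$ is automatically satisfied given the remaining column and row constraints together with $\sum_j l_j=\sum_i k_i=n$, so that it is neither dropped nor imposes a spurious extra condition. Once the correspondence ``factor index $\leftrightarrow$ row, variable index $\leftrightarrow$ column'' is fixed, the rest is routine, and the same scheme (promoting $0$-th moments to higher moments by differentiating, as in the single-mode computation) is what will then feed into Lemma~3.
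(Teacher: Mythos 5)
Your proof is correct and follows essentially the same route as the paper's: expanding $(X_0+\dots+X_{d-1})^{n}$ both directly and as the product $\prod_i (X_0+\dots+X_{d-1})^{k_i}$ via the multinomial theorem, then matching the coefficient of $\prod_j X_j^{l_j}$, with the row constraints coming from the individual factors and the column constraints from collecting exponents. The only difference is that you state it for general $d$ while the paper writes out the $d=2$ case explicitly; the bookkeeping point you flag about the redundant last column constraint is handled correctly.
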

\begin{proof}
  We consider here the expansion of $(X_0 + X_1)^n = (X_0 + X_1)^{l_0 + l_1}$
  \begin{align}
    (X_0 + X_1)^n &= \sum\limits_{l(n)} X_0^{l_0}X_1^{l_1} {n \choose {l_0, l_1}}\\
    (X_0 + X_1)^{k_0} &= \sum\limits_{m_{00}+ m_{01} = k_0} X_0^{m_{00}}X_1^{m_{01}} {k_0 \choose {m_{00}, m_{01}}}\\
    (X_0 + X_1)^{k_1} &= \sum\limits_{m_{10}+ m_{11} = k_1} X_0^{m_{10}}X_1^{m_{11}} {k_1 \choose {m_{10}, m_{11}}}
  \end{align}

  Then we take the product and group the monomials:
  \begin{equation}
    \begin{split}
      (X_0 + X_1)^{k_0+k_1} =& \sum\limits_{\substack{m_{00}+ m_{01} = k_0 \\ m_{10}+ m_{11} = k_1}} X_0^{m_{00}+m_{10}}X_1^{m_{01}+m_{11}} \\
      & \cdot{k_0 \choose {m_{00}, m_{01}}}{k_1 \choose {m_{10}, m_{11}}}\\
      =& \sum\limits_{\substack{m_{00}+ m_{10} = l_0 \\ m_{01}+ m_{11} = l_1}} X_0^{l_0}X_1^{l_1} \\
      & \cdot{k_0 \choose {m_{00}, m_{01}}}{k_1 \choose {m_{10}, m_{11}}}\\
      =& \sum\limits_{l(n)} X_0^{l_0}X_1^{l_1} \\
      & \sum\limits_{m(k,l)}{k_0 \choose {m_{00}, m_{01}}}{k_1 \choose {m_{10}, m_{11}}}
    \end{split}
  \end{equation}

  We read out the coefficient in $X_0^{l_0}X_1^{l_1}$ and we find the result.
\end{proof}

\begin{lemma}
  \label{eq:lemma3}
  Let n be any non-negative integer, three lines $k,l,\tilde{l} \in L(n)$ and two squares $\bar{m}(k,l), \tilde{m}(l, \tilde{l})$, then:
  \begin{equation}
    \begin{split}
      \sum\limits_{p(\bar{m}, \tilde{m})} \prod_{ab} \sqrt{{\bar{m}_{ab} \choose {p_{a0b}, p_{a1b}}}{\tilde{m}_{ab} \choose {p_{0ba}, p_{1ba}}}} \\
      = \sqrt{{\prod_a {l_a \choose {\tilde{m}_{a0}, \tilde{m}_{a1}}}}{\prod_b {l_b \choose {\bar{m}_{0b}, \bar{m}_{1b}}}}}
    \end{split}
  \end{equation}
\end{lemma}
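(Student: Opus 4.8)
The plan is to strip the combinatorial content out of the summand with a single square-root manipulation, and then reduce what remains to a slice-by-slice application of Lemma~\ref{eq:lemma2}.

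First I would factor the root: since all factors are non-negative,
\[
  \prod_{ab}\sqrt{{\bar m_{ab} \choose {p_{a0b},p_{a1b}}}{\tilde m_{ab} \choose {p_{0ba},p_{1ba}}}}
  = \sqrt{\prod_{ab}{\bar m_{ab} \choose {p_{a0b},p_{a1b}}}}\,\sqrt{\prod_{ab}{\tilde m_{ab} \choose {p_{0ba},p_{1ba}}}}.
\]
Expanding each multinomial into factorials, the constraint $\sum_j p_{ijk}=\bar m_{ik}$ makes the first product a genuine multinomial weight equal to $\big(\prod_{ik}\bar m_{ik}!\big)\big/\prod_{ijk}p_{ijk}!$ (after relabelling $a\mapsto i,\ b\mapsto k$, every cube entry appears exactly once in the denominator), and $\sum_i p_{ijk}=\tilde m_{kj}$ makes the second one $\big(\prod_{kj}\tilde m_{kj}!\big)\big/\prod_{ijk}p_{ijk}!$. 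Hence the whole summand collapses to $C\big/\prod_{ijk}p_{ijk}!$ with the $p$-independent constant $C:=\sqrt{\prod_{ik}\bar m_{ik}!\,\prod_{kj}\tilde m_{kj}!}$. The same manipulation on the right-hand side --- using that $\bar m(k,l)$ has column sums $l$ and $\tilde m(l,\tilde l)$ has row sums $l$ --- rewrites it as $\big(\prod_k l_k!\big)\big/C$. So Lemma~\ref{eq:lemma3} is equivalent to the clean identity $\sum_{p(\bar m,\tilde m)}\big(\prod_{ijk}p_{ijk}!\big)^{-1}=\big(\prod_k l_k!\big)\big/\big(\prod_{ik}\bar m_{ik}!\,\prod_{kj}\tilde m_{kj}!\big)$.

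Next I would note that $P(\bar m,\tilde m)$ decouples over the index $k$: a cube $p$ with $\sum_j p_{ijk}=\bar m_{ik}$ and $\sum_i p_{ijk}=\tilde m_{kj}$ is exactly a $k$-indexed family of $d\times d$ squares $(p_{ijk})_{ij}$, the $k$-th one carrying row marginals $(\bar m_{ik})_i$ and column marginals $(\tilde m_{kj})_j$, with nothing linking different slices. Consistency within a slice, $\sum_i \bar m_{ik}=l_k=\sum_j\tilde m_{kj}$, is precisely where the common line $l$ shared by the two squares enters. Hence the sum factorises as $\prod_k\big(\sum_{(p_{ijk})_{ij}}\prod_{ij}(p_{ijk}!)^{-1}\big)$, and for each $k$ the inner sum is Lemma~\ref{eq:lemma2}: in its equivalent factorial form it reads $\sum_{m(r,c)}\prod_{ij}(m_{ij}!)^{-1}=N!\big/\big(\prod_i r_i!\,\prod_j c_j!\big)$ with $N=\sum_i r_i=\sum_j c_j$, and taking $N=l_k$, $r=(\bar m_{ik})_i$, $c=(\tilde m_{kj})_j$ gives $l_k!\big/\big(\prod_i\bar m_{ik}!\,\prod_j\tilde m_{kj}!\big)$. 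Multiplying over $k$ yields the required identity. As throughout this appendix I would carry the computation for $d=2$, but every step is literal for general $d$.

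The only real obstacle is spotting the square-root collapse in the first step: the product of the two complementary multinomial weights has a perfect square (the product of all face-factorials of $\bar m$ and of $\tilde m$) in its numerator and $\big(\prod_{ijk}p_{ijk}!\big)^2$ in its denominator, so taking the root reduces the summand to a bare $1/\prod p!$. Once that is seen the rest is bookkeeping --- the one thing to watch is the transpose in the second square ($\tilde m_{kj}$, not $\tilde m_{jk}$) and the identification of $l$ as the line shared by $\bar m$ and $\tilde m$.
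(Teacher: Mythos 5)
Your proof is correct, and while it reuses the paper's central mechanism --- decoupling the cube sum into $d$ independent slices along the unconstrained axis and resolving each slice with Lemma~\ref{eq:lemma2} --- it handles the square roots in a genuinely different and, in fact, tighter way. The paper squares the left-hand side and passes from
\[
\Bigl(\textstyle\sum_{p}\sqrt{X_p Y_p}\Bigr)\Bigl(\textstyle\sum_{\tilde p}\sqrt{X_{\tilde p} Y_{\tilde p}}\Bigr)
\quad\text{to}\quad
\Bigl(\textstyle\sum_{p}X_p\Bigr)\Bigl(\textstyle\sum_{\tilde p}Y_{\tilde p}\Bigr),
\]
with $X_p=\prod_{ab}{\bar m_{ab}\choose p_{a0b},p_{a1b}}$ and $Y_p=\prod_{ab}{\tilde m_{ab}\choose p_{0ba},p_{1ba}}$, then evaluates the two unsquared sums separately by the slice argument. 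That passage is not a generic algebraic identity: by the equality case of Cauchy--Schwarz it holds precisely because $X_p/Y_p$ is independent of $p$, a fact the paper never verifies. Your factorial collapse, $X_p=\bigl(\prod_{ik}\bar m_{ik}!\bigr)/\prod_{ijk}p_{ijk}!$ and $Y_p=\bigl(\prod_{kj}\tilde m_{kj}!\bigr)/\prod_{ijk}p_{ijk}!$, is exactly the missing verification --- and once you have it, squaring is unnecessary: the summand becomes a $p$-independent constant times $1/\prod_{ijk}p_{ijk}!$, the right-hand side reduces to $\prod_k l_k!$ over the same constant, and what remains is Lemma~\ref{eq:lemma2} in factorial form applied slice by slice. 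Your bookkeeping is also right where it matters: the transpose convention $\sum_i p_{ijk}=\tilde m_{kj}$, the identification of $l$ as the margin shared by $\bar m$ and $\tilde m$, and the slice totals $\sum_i\bar m_{ik}=l_k=\sum_j\tilde m_{kj}$ that make each slice a well-posed instance of Lemma~\ref{eq:lemma2}. In short, your route both streamlines the proof and patches the one unjustified step in the paper's version.
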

\begin{proof}
  We compute the square of the left hand side of Lemma \ref{eq:lemma3}:
  \begin{equation}
    \begin{split}
      \label{eq:lemma3-1}
      \sum\limits_{p(\bar{m}, \tilde{m})} \prod_{ab} \sqrt{{\bar{m}_{ab} \choose {p_{a0b}, p_{a1b}}}{\tilde{m}_{ab} \choose {p_{0ba}, p_{1ba}}}}\\
      \cdot\sum\limits_{\tilde{p}(\bar{m}, \tilde{m})} \prod_{ab} \sqrt{{\bar{m}_{ab} \choose {\tilde{p}_{a0b}, \tilde{p}_{a1b}}}{\tilde{m}_{ab} \choose {\tilde{p}_{0ba}, \tilde{p}_{1ba}}}}\\
      = \sum_{p(\bar{m}, \tilde{m})} \prod_{ab} {\bar{m}_{ab} \choose {p_{a0b}, p_{a1b}}}\\
      \cdot\sum_{\tilde{p}(\bar{m}, \tilde{m})} \prod_{ab} {\tilde{m}_{ab} \choose {\tilde{p}_{0ba}, p_{1ba}}}
    \end{split}
  \end{equation}

  And we can find a closed formula for each independent summation appearing in the right hand side of Eq~\ref{eq:lemma3-1}.

  In the summation over $p$, only two squares are constraining the ensemble: $\bar{m}(k,l)$ and $\tilde{m}(l,\tilde{l})$.
  The idea is to split the cube into $d$ independent squares along the axis of the square that is not constrained (see Fig. \ref{fig:cube}). Then we apply Lemma \ref{eq:lemma2} to each square and we get:
  \begin{equation}
      \sum\limits_{\begin{tabular}{L L|L}
        p_{000} & p_{010} & \bar{m}_{00} \\
        p_{100} & p_{110} & \bar{m}_{10} \\
        \hline
        \tilde{m}_{00} & \tilde{m}_{01} & l_0
      \end{tabular}} {\bar{m}_{00} \choose {p_{000}, p_{010}}}{\bar{m}_{10} \choose {p_{100}, p_{110}}} = {l_0 \choose {\tilde{m}_{00}, \tilde{m}_{01}}}
  \end{equation}
  \begin{equation}
      \sum\limits_{\begin{tabular}{L L|L}
        p_{001} & p_{011} & \bar{m}_{01} \\
        p_{101} & p_{111} & \bar{m}_{11} \\
        \hline
        \tilde{m}_{10} & \tilde{m}_{11} & l_1
      \end{tabular}} {\bar{m}_{01} \choose {p_{001}, p_{011}}}{\bar{m}_{11} \choose {p_{101}, p_{111}}}
      = {l_1 \choose {\tilde{m}_{10}, \tilde{m}_{11}}}
  \end{equation}
  We can multiply these two together to obtain:
  \begin{equation}
    \sum_{p(\bar{m}, \tilde{m})} \prod_{ab} {\bar{m}_{ab} \choose {p_{a0b}, p_{a1b}}} = {\prod_a {l_a \choose {\tilde{m}_{a0}, \tilde{m}_{a1}}}}
  \end{equation}
  In the same way, we find for the other summation:
  \begin{equation}
    \sum_{p(\bar{m}, \tilde{m})} \prod_{ab} {\tilde{m}_{ab} \choose {p_{0ba}, p_{1ba}}} = {\prod_b {l_b \choose {\tilde{m}_{0b}, \tilde{m}_{1b}}}}
  \end{equation}
\end{proof}

\subsubsection{A useful matrix transformation}
We take a fixed non-negative integer $n$ and positive integer $d$.
In the following, we take $d=2$ to have concise equations, but the result is easily generalised to any $d \geq 2$.

Let us consider any square complex matrix $A$ of size $d$.
We denote its elements by $\alpha_{ab}, \forall(a,b)\in \set{0 \dots d-1}^2$.

We construct a complex square matrix $f(A)$ of size $N= \#L(n) = {{n+d-1} \choose {n, d-1}}$ like follows:
\begin{equation}
  \begin{split}
    f(A) =& \sum_{k,l} \ket{k}\bra{l} \sum_{m(k,l)} \sqrt{{k_0 \choose {m_{00}, m_{01}}}{k_1 \choose {m_{10}, m_{11}}}}\\
          & \sqrt{{l_0 \choose {m_{00}, m_{10}}}{l_1 \choose {m_{01}, m_{11}}}} \prod_{ab} \alpha_{ab}^{m_{ab}}
  \end{split}
\end{equation}

\begin{theorem}
  \label{eq:app-theorem1}
  $f$ is preserving the usual matrix multiplication, i.e. for any two square complex matrices $A, B$, we have :
  \[f(A\cdot B) = f(A)\cdot f(B)\]
\end{theorem}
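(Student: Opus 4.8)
The plan is to establish $f(A\cdot B)=f(A)\cdot f(B)$ by expanding both sides as polynomials in the entries $\alpha_{ab}$ of $A$ and $\beta_{ab}$ of $B$, matching them monomial by monomial, and letting Lemma~\ref{eq:lemma3} absorb the combinatorics. Conceptually, $f(A)$ is just the matrix of the induced action of $A$ on the $n$-photon, $d$-mode bosonic sector expressed in the orthonormal number-state basis $\{\ket{l}\}_{l\in L(n)}$ (the second-quantization functor restricted to a fixed photon number), so multiplicativity of $f$ is inherited from that of second quantization; the direct computation below is that statement unpacked, and it also pins down the index conventions hidden in the definition of $f$.

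First I would expand the left-hand side: substituting the definition of $f$ twice, the $(k,l)$ entry of $f(A)\cdot f(B)$ is a sum over an intermediate line $\tilde l\in L(n)$ and over two squares $\bar m(k,\tilde l)$ and $\tilde m(\tilde l,l)$, weighted by $\prod_{ab}\alpha_{ab}^{\bar m_{ab}}\beta_{ab}^{\tilde m_{ab}}$ and a product of four square-root multinomial coefficients, the two ``middle'' ones depending on $\tilde l$. Next I would expand the right-hand side: writing $(AB)_{ab}=\sum_c\alpha_{ac}\beta_{cb}$ and applying the multinomial theorem to each factor $(AB)_{ab}^{m_{ab}}$ in the definition of $f(A\cdot B)$, the $(k,l)$ entry becomes a sum over squares $m(k,l)$ and over cubes $p\in P(n)$ refining $m$ (i.e.\ $\sum_c p_{acb}=m_{ab}$), weighted by $\prod_{acb}\alpha_{ac}^{p_{acb}}\beta_{cb}^{p_{acb}}$ times $\sqrt{\prod_a\binom{k_a}{m_{a0},m_{a1}}\prod_b\binom{l_b}{m_{0b},m_{1b}}}\,\prod_{ab}\binom{m_{ab}}{p_{a0b},p_{a1b}}$.

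The key step is to reorganise both sums over the same index set, the cubes $p\in P(n)$. Every monomial in $(\alpha,\beta)$ that occurs determines a unique cube through its three face projections: the $\alpha$-exponents form the square $\bar m_{ac}=\sum_b p_{acb}$, the $\beta$-exponents form $\tilde m_{cb}=\sum_a p_{acb}$, and $m_{ab}=\sum_c p_{acb}$; their row/column sums recover $k$, $l$ and the linking line $\tilde l$. So the two expansions agree monomial by monomial, and the theorem reduces, for each fixed $p$, to the scalar identity
\[
\sum_{\tilde l\in L(n)}\ \sum_{\bar m,\tilde m}\Bigl(\prod_a\sqrt{\tbinom{k_a}{\bar m}\,\tbinom{\tilde l_a}{\tilde m}}\Bigr)\Bigl(\prod_b\sqrt{\tbinom{\tilde l_b}{\bar m}\,\tbinom{l_b}{\tilde m}}\Bigr)=\sqrt{\prod_a\tbinom{k_a}{m}\prod_b\tbinom{l_b}{m}}\ \prod_{ab}\tbinom{m_{ab}}{p_{a0b},p_{a1b}},
\]
where $\bar m,\tilde m$ run over squares whose ``remaining'' faces match those of $p$ and $\tbinom{k_a}{\bar m}$ abbreviates the relevant two-part multinomial. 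Isolating the $\tilde l$-dependent middle factors, the left side is exactly the sum evaluated in Lemma~\ref{eq:lemma3}; plugging in its closed form collapses the $\tilde l$-summation together with the two squares into the single square $m(k,l)$ carrying the ordinary multinomial weight, which is the right side. Summing over all $p$ then gives $f(A\cdot B)=f(A)\cdot f(B)$. As in the Lemmas I would do the bookkeeping for $d=2$; general $d$ is identical with $d$-fold multinomials.

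The main obstacle is precisely this index bookkeeping: correctly assigning each of the three face projections of a cube to its role (note the transpose among the cube constraints) and tracking the square-root normalisations so that the four square-root multinomials coming from $f(A)f(B)$ reorganise into exactly the middle factors consumed by Lemma~\ref{eq:lemma3}. Once the monomial--cube dictionary is set up, all genuine content is Lemma~\ref{eq:lemma3}, which is obtained by slicing the cube along its unconstrained axis and applying the two-dimensional Vandermonde identity of Lemma~\ref{eq:lemma2} (itself the coefficient-extraction computation on $(X_0+X_1)^n$, with Lemma~\ref{eq:lemma1} the classical warm-up). A secondary point worth recording is that $f$ is defined on the whole matrix algebra, not just on unitaries, so ``homomorphism for the matrix product'' is the correct statement; this is what will be needed later when $f$ is applied to Alice's non-unitary state-preparation map in the squashing argument.
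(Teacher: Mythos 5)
Your overall route is the one the paper takes: expand each factor $(AB)_{ab}^{m_{ab}}$ of $f(A\cdot B)$ by the multinomial theorem to produce cubes $p$, read the $\alpha$- and $\beta$-exponents off as two faces $\bar m_{ac}=\sum_b p_{acb}$ and $\tilde m_{cb}=\sum_a p_{acb}$, and let Lemma~\ref{eq:lemma3} (built on Lemma~\ref{eq:lemma2}) collapse the combinatorics into the $(k,l)$ entry of $f(A)\cdot f(B)$. The second-quantization heuristic is also a fair account of why the claim should hold.

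However, the central reduction is mis-stated in a way that goes beyond bookkeeping. A monomial $\prod_{a,c}\alpha_{ac}^{\bar m_{ac}}\prod_{c,b}\beta_{cb}^{\tilde m_{cb}}$ does \emph{not} determine a unique cube: it fixes only the two faces $\bar m$ and $\tilde m$ (hence the linking line), while many cubes $p$ --- with different third faces $m_{ab}=\sum_c p_{acb}$ --- contribute to the same monomial. This is precisely why Lemma~\ref{eq:lemma3} is a sum over $p(\bar m,\tilde m)$ rather than a one-term statement. Consequently the identity you display ``for each fixed $p$'' cannot be correct: with $p$ fixed, the sums over $\tilde l,\bar m,\tilde m$ on its left-hand side collapse to a single term, and that term depends only on the faces of $p$, whereas the right-hand side carries $\prod_{ab}\binom{m_{ab}}{p_{a0b},p_{a1b}}$, which depends on the interior of $p$. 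The correct quantification is the reverse: fix the monomial, i.e.\ the pair $(\bar m,\tilde m)$, and sum over all compatible cubes $p(\bar m,\tilde m)$ together with the third face $m$ they induce. To bring that sum into the form Lemma~\ref{eq:lemma3} evaluates you also need the intermediate rewriting the paper performs between its Eqs.~\eqref{eq:th1-2} and \eqref{eq:th1-3}: the weights $\sqrt{\binom{k_0}{m_{00},m_{01}}\binom{k_1}{m_{10},m_{11}}}\,\prod_{ab}\binom{m_{ab}}{p_{a0b},p_{a1b}}$ coming from $f(A\cdot B)$ must be refactored into the symmetric form $\prod_{ab}\sqrt{\binom{\bar m_{ab}}{p_{a0b},p_{a1b}}\binom{\tilde m_{ab}}{p_{0ba},p_{1ba}}}$ times the four square-root multinomials of $f(A)f(B)$ and a residual $\sqrt{\prod_{ij}\bar m_{ij}!\,\tilde m_{ij}!}\big/(l_0!\,l_1!)$ that exactly cancels the closed form of Lemma~\ref{eq:lemma3}. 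Your plan is salvageable and is the paper's plan in substance, but as written the key displayed identity is false and the monomial--cube correspondence needs to be corrected before Lemma~\ref{eq:lemma3} can be applied.
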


\begin{proof}
Let us take two complex matrices $A, B$ whose elements are respectively $\alpha_{ab}$ and $\beta_{ab}$.
We take $C=A\cdot B$ and we label its elements $\gamma_{ab} = \sum\limits_c \alpha_{ac} \beta_{cb}$.
Then:
\begin{equation}
  \label{eq:th1-1}
\begin{split}
  f(C) &= \sum_{k,\tilde{l}} \ket{k}\bra{\tilde{l}} \sum_{m(k,\tilde{l})} \sqrt{{k_0 \choose {m_{00}, m_{01}}}{k_1 \choose {m_{10}, m_{11}}}}\\
      & \sqrt{{\tilde{l}_0 \choose {m_{00}, m_{10}}}{\tilde{l}_1 \choose {m_{01}, m_{11}}}} \prod_{ab} \Big(\sum\limits_c \alpha_{ac}\beta_{cb}\Big)^{m_{ab}}
\end{split}
\end{equation}

We expand the $d^2$ powers:
\begin{equation}
  \label{eq:th1-powers}
  \begin{split}
    \prod_{ab} \gamma_{ab}^{m_{ab}} =& \prod_{ab}\sum\limits_{p_{ab0}+ p_{ab1} = m_{ab}} \\
    & {m_{ab} \choose {p_{ab0}, p_{ab1}}} \big(\alpha_{a0}\beta_{0b}\big)^{p_{ab0}} \big(\alpha_{a1}\beta_{1b}\big)^{p_{ab1}}
  \end{split}
\end{equation}

In order to regroup the various powers of $\alpha_{ab}$ and $\beta_{ab}$, we denote:
\begin{align}
  \bar{m}_{ab} &= \sum\limits_c p_{acb} \\
  \tilde{m}_{ab} &= \sum\limits_c p_{cba}
\end{align}

We notice that $\bar{m}_{00}+\bar{m}_{10} = \tilde{m}_{00}+\tilde{m}_{01}$, and similarly $\bar{m}_{01}+\bar{m}_{11} = \tilde{m}_{10}+\tilde{m}_{11}$.
We label these two quantities respectively $l_0$ and $l_1$.
We also notice that $\sum\limits_j \bar{m}_{ij} = k_i$ and $\sum\limits_i \tilde{m}_{ij} = \tilde{l}_j$.

We now find:
\begin{equation}
  \label{eq:th1-2}
  \begin{split}
    f(C) =& \sum_{k, \tilde{l}} \ket{k}\bra{\tilde{l}}  \sum_{m(k, \tilde{l})} \sum\limits_{p(m)}
    \sqrt{{k_0 \choose {m_{00}, m_{01}}}{k_1 \choose {m_{10}, m_{11}}}} \\
     & \sqrt{{\tilde{l}_0 \choose {m_{00}, m_{10}}}{\tilde{l}_1 \choose {m_{01}, m_{11}}}}
     \prod\limits_{ab} {m_{ab} \choose {p_{ab0}, p_{ab1}}} \alpha_{ab}^{\bar{m}_{ab}}\beta_{ab}^{\tilde{m}_{ab}}
  \end{split}
\end{equation}

We simplify the $m$ appearing in some multinomial coefficients, and introduce new multinomial coefficients relating $\bar{m}$, $\tilde{m}$ and $l$ to $p(m)$.
Then since $\bar{m}, \tilde{m}, l$ are completely fixed by the definition above, we can add an artificial summation over them that will contain only a single element corresponding to their actual definition depending on $p(m)$.

We find:

\begin{equation}
  \label{eq:th1-3}
  \begin{split}
    f(C) =& \sum\limits_{k, \tilde{l}} \ket{k}\bra{\tilde{l}}  \sum\limits_{m(k, \tilde{l})} \sum\limits_{p(m)} \sum\limits_{l(n)} \sum\limits_{\bar{m}(k,l)}\sum\limits_{\tilde{m}(l, \tilde{l})}
    \prod\limits_{ab}  \alpha_{ab}^{\bar{m}_{ab}}\beta_{ab}^{\tilde{m}_{ab}}\\
          &\sqrt{{k_0 \choose {\bar{m}_{00}, \bar{m}_{01}}} {k_1 \choose {\bar{m}_{10}, \bar{m}_{11}}} {l_0 \choose {\bar{m}_{00}, \bar{m}_{10}}} {l_1 \choose {\bar{m}_{01}, \bar{m}_{11}}}} \\
          &\sqrt{{l_0 \choose {\tilde{m}_{00}, \tilde{m}_{01}}}{l_1 \choose {\tilde{m}_{10}, \tilde{m}_{11}}} {\tilde{l}_0 \choose {m_{00}, m_{10}}} {\tilde{l}_1 \choose {m_{01}, m_{11}}}} \\
          & \sqrt{{\bar{m}_{00}\choose {p_{000}, p_{001}}}{\bar{m}_{01}\choose {p_{100}, p_{101}}}{\bar{m}_{10}\choose {p_{010}, p_{011}}}{\bar{m}_{11}\choose {p_{110}, p_{111}}}}\\
          & \sqrt{{\tilde{m}_{00}\choose {p_{000}, p_{010}}}{\tilde{m}_{01}\choose {p_{001}, p_{011}}}{\tilde{m}_{10}\choose {p_{100}, p_{110}}}{\tilde{m}_{11}\choose {p_{101}, p_{111}}}}\\
          & \sqrt{\bar{m}_{00}!\bar{m}_{01}!\bar{m}_{10}!\bar{m}_{11}!}\sqrt{\tilde{m}_{00}!\tilde{m}_{01}!\tilde{m}_{10}!\tilde{m}_{11}!}\cdot\frac{1}{l_0! l_1!}
  \end{split}
\end{equation}

Next we simplify the summation by shifting the constraints on the summation ensembles to the elements of the summation, swapping the summation order and putting back the constraints in the ensembles.
More precisely, we perform the following operations:
\begin{enumerate}
  \item We introduce $\prod\limits_{ij}\delta_{m_{ij}=\sum\limits_c p_{ijc}}$ in the summation. Here $\delta$ is $1$ if the condition in the lowerscript is satisfied, otherwise it is $0$.
  \item We change the summation over $p(m)$ into one over $p \in P(n)$ (no more constraint in $m(k, \tilde{l})$), the extra elements will make no difference thanks to the deltas we introduced before.
  \item We swap $\sum\limits_{m(k, \tilde{l})}$ and $\sum\limits_p$ since there is no more dependency in $m$ in the latter.
  \item We resolve the summation over $m(k, \tilde{l})$; the deltas introduced before are forcing only one possible $m$ in that sum to make it non-zero, therefore we can remove the whole summation and simplify the deltas.
  \item We push the summation over $p$ last, after the summations over $l, \bar{m}(k,l), \tilde{m}(l,\tilde{l})$.
  \item We put back constraints in the summation over $p$, i.e. the summation is now over $p(\bar{m}, \tilde{m})$
  \item We resolve the summation over $p$ using Lemma \ref{eq:lemma3} and simplify the remaining $\prod\limits_{ij}\sqrt{\bar{m}_{ij}!\tilde{m}_{ij}!}\cdot\frac{1}{l_0!l_1!}$ which is the exact inverse of the quantity obtained with Lemma \ref{eq:lemma3}.
\end{enumerate}

Finally, we have:
\begin{equation}
  \label{eq:th1-4}
  \begin{split}
    f(C) =& \sum\limits_{k, \tilde{l}} \ket{k}\bra{\tilde{l}}  \sum\limits_{l(n)} \sum\limits_{\bar{m}(k,l)}\sum\limits_{\tilde{m}(l, \tilde{l})}
    \prod\limits_{ab}  \alpha_{ab}^{\bar{m}_{ab}}\beta_{ab}^{\tilde{m}_{ab}}\\
          &\sqrt{{k_0 \choose {\bar{m}_{00}, \bar{m}_{01}}} {k_1 \choose {\bar{m}_{10}, \bar{m}_{11}}} {l_0 \choose {\bar{m}_{00}, \bar{m}_{10}}} {l_1 \choose {\bar{m}_{01}, \bar{m}_{11}}}} \\
          &\sqrt{{l_0 \choose {\tilde{m}_{00}, \tilde{m}_{01}}}{l_1 \choose {\tilde{m}_{10}, \tilde{m}_{11}}} {\tilde{l}_0 \choose {m_{00}, m_{10}}} {\tilde{l}_1 \choose {m_{01}, m_{11}}}}
  \end{split}
\end{equation}

This is exactly the contraction $f(A)\cdot f(B)$.
\end{proof}

\begin{corollary}
  \label{eq:corrolary}
  If we denote $\mathcal{U}(d)$ the ensemble of unitary matrices of size $d\geq 2$, then for any fixed $n \geq 1$ and $N={n+d-1 \choose {n, d-1}}$, the mapping $\big(\mathcal{U}(d), \cdot \big) \overset{f}{\longrightarrow} \big(\mathcal{U}(N), \cdot \big )$ is a group homomorphism.
  Hence if $U$ is defining an orthonormal basis, then $f(U)$ is also defining an orthonormal basis.
\end{corollary}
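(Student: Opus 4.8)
The plan is to deduce the corollary from Claim~\ref{eq:app-theorem1} by supplying two short direct computations with the definition of $f$ --- namely that $f$ fixes the identity and that $f$ commutes with the conjugate transpose --- after which unitarity and the homomorphism property are purely formal.

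First I would evaluate $f$ on the identity matrix $I_d$. When $\alpha_{ab}=\delta_{ab}$, the factor $\prod_{ab}\alpha_{ab}^{m_{ab}}$ annihilates every square $m$ that is not diagonal, and a diagonal $m\in M(k,l)$ must satisfy $m_{ii}=k_i=l_i$, forcing $k=l$; the surviving multinomial coefficients are all $\binom{k_i}{k_i,0}=1$, so $f(I_d)=\sum_{k\in L(n)}\ket{k}\bra{k}=I_N$. This incidentally confirms that $f$ genuinely takes values in matrices of size $N=\#L(n)=\binom{n+d-1}{n,d-1}$, so the target $\mathcal{U}(N)$ makes sense.

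Next I would check $f(A^\dagger)=f(A)^\dagger$. Write $\bra{k}f(A)\ket{l}=\sum_{m\in M(k,l)}w(k,l,m)\,\prod_{ab}\alpha_{ab}^{m_{ab}}$, where $w(k,l,m)$ is the product of square roots of the row- and column-multinomial coefficients appearing in the definition of $f$. The transpose relabelling $m'_{ab}:=m_{ba}$ is a bijection $M(k,l)\to M(l,k)$ that interchanges the row- and column-multinomial factors; since $\binom{n}{a,b}=\binom{n}{b,a}$, this interchange leaves the weight invariant, i.e. $w(k,l,m)=w(l,k,m')$. Because $(A^\dagger)_{ab}=\overline{\alpha_{ba}}$, substituting and re-indexing gives $\bra{k}f(A^\dagger)\ket{l}=\overline{\sum_{m'\in M(l,k)}w(l,k,m')\,\prod_{ab}\alpha_{ab}^{m'_{ab}}}=\overline{\bra{l}f(A)\ket{k}}=(f(A)^\dagger)_{kl}$. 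I expect this index bookkeeping to be the only place where care is needed; it is mechanical rather than deep.

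Finally, for $U\in\mathcal{U}(d)$, Claim~\ref{eq:app-theorem1} together with the two facts above yields $f(U)^\dagger f(U)=f(U^\dagger)f(U)=f(U^\dagger U)=f(I_d)=I_N$, and symmetrically $f(U)f(U)^\dagger=I_N$, so $f(U)\in\mathcal{U}(N)$. Combined with multiplicativity (Claim~\ref{eq:app-theorem1}) and $f(I_d)=I_N$, this is precisely the assertion that $f\colon(\mathcal{U}(d),\cdot)\to(\mathcal{U}(N),\cdot)$ is a group homomorphism. The closing sentence is then immediate, since the columns (equivalently, rows) of any unitary matrix form an orthonormal basis and $f(U)$ has been shown to be unitary.
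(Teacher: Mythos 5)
Your proof is correct and follows the same route the paper intends: the corollary is deduced directly from the multiplicativity established in Claim~\ref{eq:app-theorem1}. The paper leaves the two auxiliary facts unstated, and you supply exactly the right ones --- $f(I_d)=I_N$ and $f(A^{\dagger})=f(A)^{\dagger}$ (via the transpose bijection $M(k,l)\to M(l,k)$ that swaps the row and column multinomial factors) --- so that $f(U)^{\dagger}f(U)=f(U^{\dagger}U)=f(I_d)=I_N$ closes the argument; this is a faithful completion of the paper's implicit proof rather than a different approach.
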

In practice, $f(U)$ is containing the amplitudes of the output state after projection in the measurement basis.
Using Corollary \ref{eq:corrolary}, the basis that we obtain using the transformation $f$ is directly normalised.
We show in the next subsection how to establish the normalisation property in a different way that will be useful to establish the first moment property that we are interested in.

\subsubsection{Computing the 0-th moment}
We consider here a special case where the matrix $A$ is such that $AA^{\dagger}$ is diagonal with $d$ eigenvalues $\lambda_0, \dots, \lambda_{d-1}$.
In other words, we have $\sum\limits_j \abs{\alpha_{ij}}^2 = \lambda_i$.
It is easy to check that $f(A)f(A)^{\dagger}$ is also diagonal with $N$ eigenvalues $\lambda_0 ^{k_0}\dots \lambda_{d-1}^{k_{d-1}}, k \in L(n)$.
Then by reading out the diagonal elements of $f(A)f(A)^{\dagger}$, we find that for all $k(n)$:

\begin{widetext}
  \begin{equation}
    \sum_{l(n)}\left|{f(A)_{kl}}\right|^2
    = \sum_{l(n)} \left|\sum_{m(k,l)}\sqrt{{k_0 \choose {m_{00}, m_{01}}}{k_1 \choose {m_{10}, m_{11}}}{l_0 \choose {m_{00}, m_{10}}}{l_1 \choose {m_{01}, m_{11}}}} \prod_{ab} \alpha_{ab}^{m_{ab}}\right|^2
    = \prod_{i} \big(\sum_{j}\abs{\alpha_{ij}}^2 \big)^{k_i}
  \end{equation}
\end{widetext}

If the matrix $A$ is unitary, we find that it is $1$, i.e. the resulting basis is normalised.

\subsubsection{Computing the 1-st moment}
Now we are interested in computing:
\begin{widetext}
  \begin{equation}
    \label{eq:moment-1}
    \sum_{l(n)}w(l)\abs{f(A)_{kl}}^2
    = \sum_{l(n)} w(l)\left|\sum_{m(k,l)} \sqrt{{k_0 \choose {m_{00}, m_{01}}}{k_1 \choose {m_{10}, m_{11}}}{l_0 \choose {m_{00}, m_{10}}}{l_1 \choose {m_{01}, m_{11}}}} \prod_{ab} \alpha_{ab}^{m_{ab}}\right|^2
  \end{equation}
\end{widetext}

with $w(l) = l_0 \text{ or } l_1$. It should be possible to compute any moment the same way, but we are only concerned about the first moment here.

The idea is to revisit the derivation of Claim \ref{eq:app-theorem1} with this added weight to find a closed formula.
Let us consider the example $w(l) = l_0$.

We rewrite Eq~\eqref{eq:moment-1} into:
\begin{widetext}
  \begin{equation}
  \begin{split}
    \sum_{l(n)}w(l)\left|{f(A)_{kl}}\right|^2 = \sum_{l(n)} w(l) &\sum_{\bar{m}(k,l)}
      \sqrt{{k_0 \choose {\bar{m}_{00}, \bar{m}_{01}}}{k_1 \choose {\bar{m}_{10}, \bar{m}_{11}}}}
      \sqrt{{l_0 \choose {\bar{m}_{00}, \bar{m}_{10}}}{l_1 \choose {\bar{m}_{01}, \bar{m}_{11}}}} \prod_{ab} \alpha_{ab}^{\bar{m}_{ab}} \\
      & \sum_{\tilde{m}(k,l)}\sqrt{{k_0 \choose {\tilde{m}_{00}, \tilde{m}_{01}}}{k_1 \choose {\tilde{m}_{10}, \tilde{m}_{11}}}}
      \sqrt{{l_0 \choose {\tilde{m}_{00}, \tilde{m}_{10}}}{l_1 \choose {\tilde{m}_{01}, \tilde{m}_{11}}}} \prod\limits_{ab} \alpha_{ab}^{*\tilde{m}_{ab}}
    \end{split}
  \end{equation}
\end{widetext}

Then we take $B=A^{\dagger}$ and start from Eq~\eqref{eq:th1-4} (with $k=\tilde{l}$) where we add $w(l)$ after the summation $\sum\limits_{l(n)}$.
We can easily reverse the computation until Eq~\eqref{eq:th1-3}.
Now we change $w(l)$ into $w(p)$ by using the definition of $l$ as a function of $p$.
For our example, we have $w(l) = l_0 = \sum\limits_{ij} p_{ij0}$.
We can further reverse the computation until Eq~\eqref{eq:th1-2}.
At this point, we want to resolve the summation over $p$ with the extra weight $w(p)$.
Originally, we have the following result when there is no extra weight:
\begin{widetext}
  \begin{equation}
    \begin{split}
      \label{eq:moment1-normalisation}
      \prod\limits_{ab}(\alpha_{a0}\beta_{0b}+\alpha_{a1}\beta_{1b})^{m_{ab}} =& \sum\limits_{p(n)}\prod\limits_{ab}{m_{ab} \choose {p_{ab0}, p_{ab1}}} \alpha_{ab}^{\bar{m}_{ab}}\beta_{ab}^{\tilde{m}_{ab}}\\
      =&\sum\limits_{p_{000}+p_{001}=m_{00}} {m_{00} \choose {p_{000}, p_{001}}} \big(\alpha_{00}\beta_{00}\big)^{p_{000}}\big(\alpha_{01}\beta_{10}\big)^{p_{001}}
      \sum\limits_{p_{010}+p_{011}=m_{01}} {m_{01} \choose {p_{010}, p_{011}}} \big(\alpha_{10}\beta_{00}\big)^{p_{010}}\big(\alpha_{11}\beta_{10}\big)^{p_{011}}\\
      &\sum\limits_{p_{100}+p_{101}=m_{10}} {m_{10} \choose {p_{100}, p_{101}}} \big(\alpha_{00}\beta_{01}\big)^{p_{100}}\big(\alpha_{01}\beta_{11}\big)^{p_{101}}
      \sum\limits_{p_{110}+p_{111}=m_{11}} {m_{11} \choose {p_{110}, p_{111}}} \big(\alpha_{10}\beta_{01}\big)^{p_{110}}\big(\alpha_{11}\beta_{11}\big)^{p_{111}}
    \end{split}
  \end{equation}
\end{widetext}

The extra weight will result in a polynomial differentiation with respect to certain parameters.
More precisely, we use the following property like in the monomode case:
\begin{equation}
  \begin{split}
    \sum_{n_0+n_1 = n} {n \choose {n_0, n_1}}n_0 X_0^{n_0}X_1^{n_1} &= X_0 \frac{\diff\big[(X_0 + X_1)^n\big]}{\diff X_0}\\
    &= nX_0\big(X_0 + X_1\big)^{n-1}
  \end{split}
\end{equation}

We split the weight according to the $d^2$ elements $m_{ij}$, for instance we find for $p_{000}$ and then for the full weight $w(p)=l_0$:

\begin{widetext}
  \begin{align}
    \begin{split}
      \sum\limits_{p(m)} p_{000}\prod\limits_{ab} {m_{ab} \choose {p_{ab0}, p_{ab1}}} \alpha_{ab}^{\bar{m}_{ab}}\beta_{ab}^{\tilde{m}_{ab}} =&
      m_{00}\alpha_{00}\beta_{00}
      \big(\alpha_{00}\beta_{00} + \alpha_{01}\beta_{10}\big)^{m_{00}-1}
      \big(\alpha_{00}\beta_{01} + \alpha_{01}\beta_{11}\big)^{m_{01}}\\
       &\cdot\big(\alpha_{10}\beta_{00} + \alpha_{11}\beta_{10}\big)^{m_{10}}
       \big(\alpha_{10}\beta_{01} + \alpha_{11}\beta_{11}\big)^{m_{11}}
    \end{split}\\
    \sum\limits_{p(m)} w(p)\prod\limits_{ab} {m_{ab} \choose {p_{ab0}, p_{ab1}}} \alpha_{ab}^{\bar{m}_{ab}}\beta_{ab}^{\tilde{m}_{ab}} =&
     \prod\limits_{ab}  \gamma_{ab}^{m_{ab}}
     \left(\begin{matrix} &&m_{00}\frac{\alpha_{00}\beta_{00}}{\alpha_{00}\beta_{00} + \alpha_{01}\beta_{10}} &+& m_{01}\frac{\alpha_{00}\beta_{01}}{\alpha_{00}\beta_{01} + \alpha_{01}\beta_{11}}\\
                        &+& m_{10}\frac{\alpha_{10}\beta_{00}}{\alpha_{10}\beta_{00} + \alpha_{11}\beta_{10}} &+& m_{11}\frac{\alpha_{10}\beta_{01}}{\alpha_{10}\beta_{01} + \alpha_{11}\beta_{11}}\end{matrix} \right)
  \end{align}
\end{widetext}

When $A\cdot A^{\dagger}$ is diagonal, we find that necessarily $m_{10}=m_{01}=0$, $m_{00}=k_0$ and $m_{11}=k_1$.
Hence:
\begin{equation}
  \begin{split}
  \sum\limits_{l(n)}l_0\left|{f(A)_{kl}}\right|^2 = & \big(\abs{\alpha_{00}}^2 + \abs{\alpha_{01}}^2\big)^{k_0}\big(\abs{\alpha_{10}}^2 + \abs{\alpha_{11}}^2\big)^{k_1}\\
  &\left( k_0\frac{\abs{\alpha_{00}}^2}{\abs{\alpha_{00}}^2 + \abs{\alpha_{01}}^2} + k_1\frac{\abs{\alpha_{10}}^2}{\abs{\alpha_{10}}^2 + \abs{\alpha_{11}}^2} \right)
\end{split}
\end{equation}

This is equivalent to taking the derivative of the $0-$th moment with respect to the first parameter in each parenthesis.
The same applies for the first moment in $l_1$ which reads:

\begin{equation}
  \begin{array}{ll}
  \sum\limits_{l(n)}l_1\left|{f(A)_{kl}}\right|^2 = & \big(\abs{\alpha_{00}}^2 + \abs{\alpha_{01}}^2\big)^{k_0}\big(\abs{\alpha_{10}}^2 + \abs{\alpha_{11}}^2\big)^{k_1}\\
  &\left( k_0\frac{\abs{\alpha_{01}}^2}{\abs{\alpha_{00}}^2 + \abs{\alpha_{01}}^2} + k_1\frac{\abs{\alpha_{11}}^2}{\abs{\alpha_{10}}^2 + \abs{\alpha_{11}}^2} \right)
\end{array}
\end{equation}

Finally, if we take $A$ to be unitary and further divide the first moment in $l_0$ and $l_1$ by $n$, we recover the universal squashing equivalence for multimodes as in Eq~\eqref{eq:squashing-stats1}.

\subsection{Squashing coherent states}
\label{ssec:app-squashing-coherent}
The generalised result we proved applies for any fixed $n$ and any population of the input modes $k(n)$.
However in practice Bob will never receive an input state with a fixed number of photons since Alice is preparing coherent states that can have an arbitrary number of photons.
It is easy to generalise the result to a classical mixture of photon number states, but here Alice's states are coherent and there is coherence between photon number states of the same mode, hence it is not clear if the result can still apply.

To clarify this point, we use a simple trick.
Since eventually, the detectors are photon-number sensitive, we can always assume that there is a virtual non-destructive measurement of the global photon number across all the modes preceding the actual photon number measurement in each of the arms.
Here it is important to highlight two points.
First this measurement is not performed in practice, but it commutes with the actual measurement, so the statistics will be unchanged and we can always assume it was performed.
Second, this total photon number measurement is a global measurement acting on all $d$ modes at once, and it gives no information about the particular photon-number distribution in each of the arm.

As a result, it is mode basis-independent and coherence-nonbreaking (within the $n$ photon number subspace).
Since the unitary transformation representing the receiver is equivalent to changing basis for the modes, it is equivalent to measure the number of photons before the circuit or after, so we can always assume that the photon number measurement was performed first before the actual receiver transformation.
Now the virtual total photon number measurement will project the input state onto a quantum state (possibly mixed) with a fixed number of photons $n$ and the theorem directly applies.

To see that it is equivalent to measure the total photon number before or after the circuit, let us consider a simple example:
two input modes $a_0, a_1$ and two output modes $b_0, b_1$ are related by a unitary transformation $U$ such that:
\begin{equation}
  \begin{cases}
    b_0 =& u_{00}a_0 + u_{01}a_1\\
    b_1 =& u_{10}a_0 + u_{11}a_1
  \end{cases}
  \label{eq:ex-basis-change}
\end{equation}
Then if we denote $\hat{n}_a = a_0^{\dagger}a_0 + a_1^{\dagger}a_1$ and $\hat{n}_b = b_0^{\dagger}b_0 + b_1^{\dagger}b_1$ the observable of the total number of photons in the two input modes and two output modes respectively, then it is easy to check that $\hat{n}_a = \hat{n}_b$ using Eq.~\eqref{eq:ex-basis-change} and $UU^{\dagger}=U^{\dagger}U=I$.
The general case for any dimension $d$ is derived in a similar way.

% Create the reference section using BibTeX:
\bibliography{references}

\end{document}